\newtheorem*{example}{Example}
\definecolor{DarkGreen}{rgb}{0.1,0.5,0.1}
\definecolor{DarkRed}{rgb}{0.5,0.1,0.1}
\definecolor{DarkBlue}{rgb}{0.1,0.1,0.5}
\newtheorem{thm}{Theorem}[section]
\newtheorem*{thm*}{Theorem}
\newtheorem{prop}[thm]{Proposition}
\newtheorem{lem}[thm]{Lemma}
\newtheorem{remark}[thm]{Remark}
\newtheorem{claim}[thm]{Claim}
\newtheorem{defi}[thm]{Definition}
\newtheorem{obs}[thm]{Observation}
\renewcommand{\epsilon}{\varepsilon}
\newcommand{\F}{\mathbb{F}}
\newcommand{\cC}{\mathcal{C}}
\newcommand{\Fp}{\F_p}
\newcommand{\ep}[1]{\boldsymbol{e}_p \left( #1 \right)}
\newcommand{\ceil}[1]{\left \lceil #1 \right \rceil}
\newcommand{\ronicomment}[1]{{\color{red}[Roni: #1]}}
\newcommand{\cL}{\mathcal{L}}
\newcommand\todo[1]{}
\newcommand{\njs}[1]{}
\newcommand{\ph}[1]{}
\def\multiset#1#2{\ensuremath{\left(\kern-.3em\left(\genfrac{}{}{0pt}{}{#1}{#2}\right)\kern-.3em\right)}}
\newcommand\blfootnote[1]{%
  \begingroup
  \renewcommand\thefootnote{}\footnote{#1}%
  \addtocounter{footnote}{-1}%
  \endgroup
}
\newcommand{\Leak}{\textup{Leak}_{\tau}}
\begin{document}
\newcommand{\zachicomment}[1]{{\color{brown}[Zachi: #1]}}
\newcommand{\noahcomment}[1]{{\color{orange}[Noah: #1]}}

\title{Repairing Reed-Solomon Codes over Prime Fields via Exponential Sums}

\author{
    Roni Con\thanks{Department of Computer Science, Tel Aviv University. Email: roni.con93@gmail.com} \and 
    Noah Shutty \thanks{Stanford Institute for Theoretical Physics, Stanford University, noaj@alumni.stanford.edu} 
    \and Itzhak Tamo \thanks{Department of Electrical Engineering-Systems, Tel Aviv University. Email: tamo@tauex.tau.ac.il}
    \and Mary Wootters \thanks{Department of Computer Science and Electrical Engineering, Stanford University, Email: marykw@stanford.edu}
   	}
\maketitle
\begin{abstract}
    This paper presents two repair schemes for low-rate Reed-Solomon (RS) codes over prime fields that can repair any node by downloading a \emph{constant} number of bits from each surviving node.
The total bandwidth resulting from these schemes is greater than that incurred during trivial repair; however, this is particularly relevant in the context of \emph{leakage-resilient secret sharing}. In that framework, our results provide attacks showing that $k$-out-of-$n$ Shamir's Secret Sharing over prime fields for small $k$ is \emph{not} leakage-resilient, even when the parties leak only a \emph{constant} number of bits. To the best of our knowledge, these are the first such attacks.

Our results are derived from a novel connection between exponential sums and the repair of RS codes. Specifically, we establish that non-trivial bounds on certain exponential sums imply the existence of explicit nonlinear repair schemes for RS codes over prime fields.
   \blfootnote{
    The work of Itzhak Tamo and Roni Con was supported by the European Research Council (ERC grant number 852953). Noah Shutty was supported in part by NSF DGE-1656518.  Mary Wootters was supported in part by NSF grants CCF-2133154 and CCF-1844628.
    }
     \blfootnote{A preliminary version of this work was presented at the IEEE International Symposium on Information Theory (ISIT) 2023.}

\end{abstract}

\newpage


 
\section{Introduction}

Reed-Solomon (RS) codes are a widely-used family of codes, both in theory and in practice.
Among their many applications, RS codes are utilized in distributed storage systems, as evidenced in systems like Facebook, IBM, Google, etc. (refer to Table 1 in \cite{dinh2022practical}). 
In such systems, a large file is encoded using an erasure-correcting code and then distributed across multiple nodes. Upon the failure of a node, it is desirable to efficiently set up a replacement node using information from the remaining nodes. In this work, we focus on the \emph{repair bandwidth}---the total amount of information downloaded---as our metric of efficiency. The challenge of recovering a failed node with minimal repair bandwidth was first addressed in the seminal paper by Dimakis et al. \cite{dimakis2010network} and has since been a significant topic of research.


 
We begin by defining RS codes.
\begin{defi}
		Let $\alpha_1, \alpha_2, \ldots, \alpha_n$ be distinct points of the finite field $\mathbb{F}_q$ of order  $q$. For $k<n$ the $[n,k]_q$ \emph{RS code} 
		defined  by the  evaluation set $\{ \alpha_1, \ldots, \alpha_n \}$ is the   set of codewords 
		\[
		\left \lbrace \left( f(\alpha_1), \ldots, f(\alpha_n) \right) \mid f\in \mathbb{F}_q[x],\deg f < k \right \rbrace \;.
		\]
		When $n = q$, the resulting  code is called a \emph{full-length RS code}.
\end{defi}

When the code used in the storage system is an RS code, the repair problem can be viewed as a variation of the standard polynomial interpolation problem: Repairing, say, the \(i\)th node, is equivalent to recovering an evaluation \(f(\alpha_i)\) using minimal information from the evaluations \(f(\alpha_j)\) for \(j \neq i\).
Formally, a \emph{repair scheme} for an \([n,k]_q\) RS code with evaluation points \(\alpha_1, \ldots, \alpha_n\) consists of functions \(\tau_j: \mathbb{F}_q  \to \{0,1\}^m\) for each \(j \neq i\); and a repair function \(G: \{0,1\}^{m \times (n-1)} \times [n] \to \mathbb{F}_q\), for some parameter \(m\). We note that the function $\tau_j, G$ may depend on the identity of the failed node $i$ and should perhaps be called $\tau_{j,i}$ and $G_i$ respectively; however, for the same of notational clarity, we omit the $i$ subscript.

For a failed node \(i\in [n]\), and any polynomial \(f\) of degree less than \(k\) (representing the stored data), each surviving node \(j \neq i\) sends a message \(\tau_j(f(\alpha_j))\). The repair scheme then takes in the messages \(\tau_j(f(\alpha_j))\), along with \(i\), and outputs the missing information \(f(\alpha_i)\) using \(G\). 
Our goal is to minimize \(m\), the number of bits sent by each node.\footnote{We note that traditionally, the definition of a repair scheme permits contacting only a subset of the nodes, with each node potentially sending varying amounts of information. In the definition provided above, we assume the scenario where all surviving nodes transmit the same amount of information. Later, we will relax this definition to allow for some of the surviving nodes to abstain from sending any information.}

Via standard polynomial interpolation, it is clear that any $k$ values of $f(\alpha_j)$ suffice to recover the polynomial $f$, and in particular, to recover  $f(\alpha_i)$.  This requires $k \log_2(q)$ bits of information; we refer to this as \emph{trivial repair}. 
However, as was shown in a line of work including \cite{shanmugam2014repair,guruswami2017repairing,tamo2017optimal,con2021nonlinear}, it is possible to do better!  That is, it is possible to recover $f(\alpha_i)$ using strictly less than $k \log(q)$ bits from the other nodes!


\subsection{Repairing Reed-Solomon Codes over Prime Fields} All of the known RS repair schemes but the one in \cite{con2021nonlinear} require that the underlying field be an extension field. In contrast, our work focuses on \emph{prime fields}. Our main motivation stems from applications in \emph{secret sharing}.

In secret sharing, the well-known \emph{Shamir's secret sharing scheme} (Shamir's SSS) is analogous to RS codes. Informally speaking (see \Cref{sec:related} for more details), repair schemes for RS codes in which every node transmits a small number of bits are equivalent to attacks on Shamir's SSSs when the parties may each leak a small amount of adversarially selected information about their shares. In essence, a repair scheme for an RS code implies that an instance of Shamir's SSS is not \emph{leakage-resilient}.
 
 Since the schemes of \cite{shanmugam2014repair, guruswami2017repairing, tamo2017optimal} and others require extension fields, a natural hope for constructing Shamir's SSSs that are leakage-resilient is to work over prime fields, and indeed several works~\cite{benhamouda2021local,nielsen2020lower,maji2021leakage,maji2022improved,klein2023new} 
 have shown that, when the dimension $k$ of the code is large, $\Omega(n)$, Shamir's SSS over prime fields is leakage resilient.  In particular, their results imply that for any constant $m$, there is some constant $\alpha \in (0,1)$ so that any $[n,k]_p$ RS code over a prime field $\mathbb{F}_p$ with $k \geq \alpha n$ does \emph{not} admit a repair scheme that downloads $m$ bits from each surviving node (see Theorem~\ref{thm:secret-sharing-leakage-cons-rate}). 

On the other hand, \cite{con2021nonlinear} showed the existence of asymptotically optimal repair schemes for RS codes of dimension \(k=2\) over sufficiently large prime fields $\mathbb{F}_p$. In their work, each party leaks a non-constant number of bits, specifically, \(m = \frac{1}{n-2} \log(p) + O_n(1)\) bits (see \Cref{tab:results-table} for their results for larger \(k\)).
Our results continue the line of work initiated by \cite{con2021nonlinear}. We make progress by presenting schemes where each node transmits a \emph{constant} number of bits, that is, $m = O(1)$; equivalently, for Shamir's SSS, each party leaks a constant number of bits.
Therefore, our results have implications both for attacks on Shamir's SSS and for distributed storage systems.

\begin{table}
		\begin{center}
			\begin{tabular}{|| c | c | c| p{4cm}||}
				\hline
				& RS Code & Bandwidth per node & Remarks \\ [0.5ex]
				\hline
				\cite[Theorem 3.3]{con2021nonlinear} & $[n,2]_p$ & $\frac{\log(p)}{n-2} +O_n(1) $ & Non-explicit construction.\\
                    \hline
				\cite[Theorem 4.3]{con2021nonlinear} & $[n,k]_p$ & $\frac{\log(p)}{n-k} +O_n(1) $ & Repair only possible for a particular failed node (rather than any failed node)\tablefootnote{This is achieved by puncturing the code in \cite[Theorem 4.3]{con2021nonlinear}.} using all the remaining nodes. \\
                \hline
				 \Cref{thm:shortKloostermanRepair} & $[p^{\delta},3]_p$ & $3$ & $\frac{(\ln \ln p)^\frac{1}{3}}{(\ln p)^\frac{2}{3}}\leq \delta\leq \frac{1}{2}$\\
                \hline
				 \Cref{thm:Weil-recover-const} & $[p,\Theta_B(\sqrt{p})]_p$ & $B\geq 3$ & Recovers the entire polynomial\\
				[1ex]\hline
			\end{tabular}
		\end{center}
		\caption{Our result compared to \cite{con2021nonlinear}. If not stated in the Remarks column, all of the results presented are explicit constructions, and all the results give repair schemes that repair any single failed node using all the remaining nodes.  For all results,  $p$ is a sufficiently large prime. For the results in \cite{con2021nonlinear}, $n$ and $k$ are assumed to be constants relative to $p$.}
		\label{tab:results-table}
	\end{table}

\subsection{Our Results}
In this paper, we present two schemes where every node of an RS code transmits a \emph{constant} number of bits. Both schemes are explicit, meaning that the functions \(\tau_j\) computed by the nodes as well as the reconstruction function \(G\) are explicitly defined. It is noteworthy that the first scheme is a repair scheme, while the second scheme not only enables the repair of a failed node but is, in fact, a decoding scheme that facilitates the recovery of the original codeword.  

This latter point might raise some red flags for the reader familiar with regenerating codes, as it implies that the scheme downloads enough information to recover the entire codeword, while the goal in regenerating codes is to generally download less information than that.  In fact, both of our schemes download \emph{more} information than  $k \cdot \log(p)$ bits, the amount required to recover the entire codeword.  

The fact that our schemes have such high download costs means that they are not competitive with the ``trivial'' repair scheme (which downloads $\log(p)$ bits from each of $k$ nodes) if one cares only about total bandwidth.
However, our schemes are useful in settings where each node can only transmit very few bits; in such a setting, the trivial repair is not an option.  This could be the case in distributed storage models (e.g., in a model where each link can only transmit a few bits).  It is also the case in the model of leakage-resilient secret sharing.
In particular, our results give attacks on Shamir's SSSs over prime fields, where each leaking party need only send a constant number of bits. To the best of our knowledge, these are the first such attacks with a constant number of bits per leaking party.

With that in mind, we present our two main results below.
Throughout this paper, let \( p \) be a prime number. We proceed to present our first result: a repair scheme for an RS code of dimension~$3$).

\begin{thm*}
    [informal, see \Cref{thm:shortKloostermanRepair}]
    For $\exp((\ln p)^{2/3} (\ln \ln p)^{1/3})\leq n \leq \sqrt{p}$, there exists an $[n,3]_p$ RS code where any node can be repaired by downloading \emph{three} bits from each of the $n-1$ remaining nodes. 
\end{thm*}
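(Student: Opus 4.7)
My plan is to construct an explicit \emph{interval} nonlinear repair scheme and reduce its correctness to a bound on short Kloosterman-type exponential sums. Concretely, I would take the evaluation set to consist of modular inverses $\alpha_j = 1/x_j$ for $x_j \in \{1, 2, \ldots, n\} \subseteq \mathbb{F}_p$, and let each node $j$ send the three bits $\tau_j(y) = \lfloor 8 y / p \rfloor$ specifying which of the eight equal-length intervals of $\{0, 1, \ldots, p-1\}$ contains $y$. Correctness at failed node $i$ amounts to showing that every pair of distinct polynomials $f, g$ of degree at most $2$ with $f(\alpha_i) \neq g(\alpha_i)$ is distinguished by some node $j \neq i$, i.e.\ $\tau_j(f(\alpha_j)) \neq \tau_j(g(\alpha_j))$.

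Setting $h = g - f$, a short analysis shows it suffices to prove: for every nonzero $h \in \mathbb{F}_p[x]$ of degree at most $2$, there exists $j \neq i$ such that the centered representative of $h(\alpha_j) \pmod p$ has absolute value $\geq p/8$. Indeed, whenever this holds, $f(\alpha_j)$ and $g(\alpha_j)$ cannot share an interval of length $p/8$, for any $f$. The contrapositive assumes $h(\alpha_j) \in I := (-p/8, p/8)$ for all $j \neq i$, and we count $N(h) := \#\{j \neq i : h(\alpha_j) \in I\}$ via Fourier analysis on $\mathbb{F}_p$:
\[
N(h) = \frac{|I|}{p}(n-1) + \frac{1}{p} \sum_{t \neq 0} \widehat{\mathbf{1}_I}(t) \sum_{j \neq i} \ep{t \cdot h(\alpha_j)}.
\]
The main term is about $n/4$, so the task is to bound the character-sum contribution by $o(n)$ uniformly in nonzero $h$. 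Writing $h(y) = c_2 y^2 + c_1 y + c_0$ with $(c_1, c_2) \neq (0,0)$ and substituting $\alpha_j = 1/x_j$, the inner sum becomes
\[
\sum_{x_j=1}^{n} \ep{t c_0 + t c_1 / x_j + t c_2 / x_j^2},
\]
a short exponential sum with a degree-two rational phase.

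By short Kloosterman sum bounds in the spirit of Karatsuba, Korolev, and Bourgain, such sums over intervals of length $n$ admit non-trivial cancellation of the form $|S| \leq n^{1-\delta}$ precisely in the range $\exp((\ln p)^{2/3}(\ln \ln p)^{1/3}) \leq n \leq \sqrt{p}$ stated in the theorem. Pairing this with the standard decay estimate $|\widehat{\mathbf{1}_I}(t)| \lesssim \min(|I|/p, 1/|t|)$ and truncating the $t$-sum appropriately yields $N(h) = n/4 + o(n) < n-1$, completing the reduction.

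\textbf{Main obstacle.} The principal challenge is that the classical short Kloosterman bounds target the phase $a/x$ (or $ax + b/x$), whereas a general quadratic $h$ produces the rational phase $b/x + c/x^2$. I would need either to invoke a suitable generalization of the Karatsuba--Korolev method (based on sum-product estimates in $\mathbb{F}_p$) covering this phase for all nonzero $(b,c)$, or to verify that the standard argument extends intact. A secondary difficulty is uniformity: the cancellation must be uniform in the coefficients of $h$ and in the shift $t$, with losses small enough that the $o(n)$ estimate survives after summing against $\widehat{\mathbf{1}_I}$ up to $|t| \sim p$, which introduces at worst a logarithmic factor that must be absorbed.
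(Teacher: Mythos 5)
Your proposal contains a genuine gap, and it is precisely the one you flag as your ``main obstacle.'' With the uniform interval partition $\tau_j(y) = \lfloor 8y/p\rfloor$ applied to the inverse-evaluation points $\alpha_j = 1/x_j$, the exponential sums you must bound have a rational phase $t c_1/x + t c_2/x^2$, whose $1/x^2$ term is \emph{not} covered by the Korolev-type short Kloosterman bounds you cite. Those bounds handle phases of the shape $a/\nu$ and $a/\nu + b\nu$, and there is no off-the-shelf replacement for the $b/\nu + c/\nu^2$ phase in this range of $n$. The paper does not ``work around'' this obstacle --- it eliminates it at the level of the construction. Rather than using the same interval partition at every node, the paper uses node-dependent arithmetic progressions: the partition at node $i$ has step $\gamma_i = i - \ell$ (where $\ell$ is the failed node), so the correctness condition becomes $f(i)/(i-\ell) \in [-t,t]$ rather than $f(i)\in [-t,t]$. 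Expanding $f$ around $\ell$ as $f(x) = f_2(x-\ell)^2 + f_1(x-\ell) + f_0$, the quantity $f(i)/(i-\ell) = f_2(i-\ell) + f_1 + f_0/(i-\ell)$ becomes \emph{exactly} a Kloosterman-type phase $f_2\nu + f_0/\nu$ (plus a harmless constant $f_1$) after the change of variable $\nu = i-\ell$, and the nondegeneracy $\gcd(f_0,p)=1$ needed for Korolev's bound falls out of the assumption $f(\ell) = f_0 \neq 0$. The step-$\gamma_i$ partitions are exactly the content of Proposition~\ref{con:repair-condition} from \cite{con2021nonlinear}, and without them the reduction to Theorem~\ref{thm:korolev1} does not go through.

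Two further, smaller points. First, the paper also takes a more elementary route to the contradiction than your Fourier/completion argument: instead of estimating $N(h)$ via $\widehat{\mathbf{1}_I}$ and truncating, it uses Lemma~\ref{lem:large-sum} (if all $a_i \in [-t,t]$ then $|\sum \ep{a_i}| \geq n\cos(2\pi t/p)$) to get the lower bound $> 0.7n$ directly, then splits the sum at $\ell$ into two short sums over $\{1,\ldots,\ell\}$ and $\{1,\ldots,n-\ell\}$ and applies Korolev to the longer piece. Your Fourier route is workable in principle but needlessly introduces the truncation/log-factor bookkeeping you worry about at the end. Second, you should verify that your $o(n)$ bound can survive the $\ep{t\cdot}$-twist uniformly over all $t\neq 0$; in the paper's argument no such twist arises because the lower bound from Lemma~\ref{lem:large-sum} is applied to a single sum, not to a family indexed by frequencies.
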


Our second result is a decoding scheme for a full-length RS code, which enables the recovery of the original codeword using a constant number of bits from a subset of the nodes. This implies that not all nodes are required to participate in the decoding process. We prove the following.
\begin{thm*} [informal, see \Cref{thm:Weil-recover-const}]
    Let $B\geq 3$ be an integer. The full length $[p,k]_p$ RS code admits a decoding scheme by downloading $B$ bits from any $p-m$ nodes where $k+m \leq \cos \left({\frac{2\pi}{2^B} + \frac{2\pi}{p}} \right) \sqrt{p}$. 
\end{thm*}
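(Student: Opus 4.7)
The plan is to construct an explicit encoding/decoding pair and establish uniqueness of the consistent codeword via the Weil bound for additive character sums. Partition $\mathbb{F}_p$, viewed as the integers $\{0,1,\ldots,p-1\}$, into $2^B$ consecutive intervals $I_0,\ldots,I_{2^B-1}$, each of length $\lfloor p/2^B\rfloor$ or $\lceil p/2^B\rceil$. The $j$-th node's $B$-bit message is $\tau_j(f(\alpha_j)):=r$, where $r$ is the unique index such that the integer representative of $\alpha_j f(\alpha_j)\bmod p$ lies in $I_r$. The multiplicative twist by $\alpha_j$ is included specifically to handle the constant-polynomial case in the uniqueness step below. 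The decoder, given the messages $\{r_j\}_{j\in J}$ from the $|J|=p-m$ reporting nodes, outputs the (unique) polynomial $g$ of degree less than $k$ such that $\alpha_j g(\alpha_j)\bmod p\in I_{r_j}$ for every $j\in J$.

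For the uniqueness step, I would suppose two distinct polynomials $f\neq g$ of degree less than $k$ both satisfy the constraints and set $h:=g-f$, a nonzero polynomial of degree at most $k-1$. For each $j\in J$, both $\alpha_j f(\alpha_j)\bmod p$ and $\alpha_j g(\alpha_j)\bmod p$ lie in the common interval $I_{r_j}$ of length at most $\lceil p/2^B\rceil\le p/2^B+1$, so the difference $\alpha_j h(\alpha_j)\bmod p$, interpreted in $(-p/2,p/2]$, has absolute value at most $p/2^B+1$. Writing $\theta:=2\pi/2^B+2\pi/p$, this yields $\mathrm{Re}(\ep{\alpha_j h(\alpha_j)})\ge \cos(\theta)$ for every $j\in J$, and summing over the reporting set,
\[
\mathrm{Re}\sum_{j\in J}\ep{\alpha_j h(\alpha_j)}\ \ge\ (p-m)\cos(\theta).
\]

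On the other hand, the polynomial $H(x):=x\cdot h(x)$ is nonconstant of degree at most $k$, so the Weil bound yields $\bigl|\sum_{x\in \mathbb{F}_p}\ep{H(x)}\bigr|\le (k-1)\sqrt{p}$; trimming the at most $m$ missing terms yields $\bigl|\sum_{j\in J}\ep{\alpha_j h(\alpha_j)}\bigr|\le (k-1)\sqrt{p}+m$. Combining with the lower bound above gives $(p-m)\cos(\theta)\le (k-1)\sqrt{p}+m$, and a direct algebraic manipulation shows this contradicts the hypothesis $k+m\le \cos(\theta)\sqrt{p}$ whenever $p\ge 5$. Hence the consistent polynomial $g$ is unique, and $g=f$.

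The main obstacle in this plan is the constant-polynomial case: if one encoded $f(\alpha_j)$ directly rather than $\alpha_j f(\alpha_j)$, then $h\equiv c\neq 0$ with $|c|\le p/2^B$ would evade detection, and Weil's bound on $\sum_x \ep{c}=p\,\ep{c}$ would be vacuous. The multiplicative twist replaces $h$ by $H(x)=cx$, a nonconstant linear polynomial whose additive character sum over $\mathbb{F}_p$ vanishes, so the uniform argument covers every nonzero $h$. A secondary technicality is justifying the inflation $2\pi/2^B\to 2\pi/2^B+2\pi/p$ in the angular bound, which precisely absorbs the slack in $\lceil p/2^B\rceil\le p/2^B+1$ when converting the integer interval length into an arc angle.
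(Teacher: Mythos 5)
Your proof is correct, but it takes a genuinely different route from the paper's. You twist each share by its evaluation point, encoding $\alpha_j f(\alpha_j)$, which forces $H(x)=xh(x)$ to be nonconstant and thereby covers the constant-difference case; you then handle the $m$ silent nodes by a triangle inequality, bounding the incomplete sum over the reporting set by the complete Weil sum plus a trivial additive $m$, giving an upper bound of $(k-1)\sqrt{p}+m$. The paper instead chooses the twist $\gamma_i^{-1}=\prod_{j=1}^m(\alpha_i-\ell_j)$, where $\ell_1,\dots,\ell_m$ are the missing nodes: this makes the twisted exponent $f(\alpha)\cdot\prod_j(\alpha-\ell_j)$ vanish at the missing nodes, so the \emph{complete} sum over $\mathbb{F}_p$ can be used directly with no trimming, at the cost of the polynomial's degree rising to $k+m-1$, giving the Weil bound $(k+m-1)\sqrt{p}$. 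Your upper bound is strictly tighter whenever $m\geq 1$, since $m< m\sqrt{p}$; indeed your estimate would suffice for $m$ up to $\Theta(p)$ (with $k=O(\sqrt{p})$), strictly beyond the constraint $k+m\leq \cos(\cdot)\sqrt{p}$ of the stated theorem, so your approach actually proves something stronger. One small point you did not address: at $\alpha_j=0$ the twist is by zero, so $\tau_0$ is a constant function and node $0$ is effectively silent, which falls outside the framework of \Cref{con:decode-cond} (which requires $\gamma_i\in\mathbb{F}_p^*$); your direct injectivity argument nonetheless handles this term harmlessly, since $\ep{H(0)}=\ep{0}=1$ contributes $\geq\cos\theta$ to the lower bound and is already counted in the Weil sum. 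The paper's twist sidesteps this entirely because $\prod_j(\alpha_i-\ell_j)\neq 0$ on the reporting set.
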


\subsection{Related Work}\label{sec:related}
\paragraph{Low-bandwidth repair for distributed storage.}
As mentioned above, the low-bandwidth repair problem was introduced in \cite{dimakis2010network}, and since then there have been many code constructions and repair schemes aiming at optimal repair, for example~\cite{el2010fractional,goparaju2013data,papailiopoulos2013repair,tamo2012zigzag,wang2016explicit,rashmi2011optimal,ye2017explicithighrate,ye2017explicit,goparaju2017minimum}. 
The study of repairing Reed-Solomon codes was introduced in~\cite{shanmugam2014repair}, and the works \cite{guruswami2017repairing,tamo2017optimal}, among others, have constructed repair schemes for RS codes over extension fields with total bandwidth that is much smaller than the trivial bandwidth; in particular, the work \cite{tamo2017optimal} showed how to construct RS codes that achieve the \emph{cut-set bound} proved in \cite{dimakis2010network}, i.e., with the minimal possible bandwidth.

Our work focuses on prime fields.  To the best of our knowledge, the only work that gives positive results for repairing RS codes over prime fields is~\cite{con2021nonlinear}, discussed above and summarized in \Cref{tab:results-table}. 

\paragraph{Leakage-Resilient Secret Sharing.}
 Shamir's SSS, which was first introduced in \cite{shamir1979share}, is a fundamental cryptographic primitive that provides a secure method to   distribute a secret among   different parties, so that any $k$ can recover the secret but no $k-1$ learn anything about it.  
Formally, given a secret $s \in \mathbb{F}$, a \emph{reconstruction threshold} $k > 0$ and a number of parties $n$, Shamir's SSS works as follows.  A dealer chooses a random polynomial $f$ of degree at most $k-1$, so that $f(0) = s$.  Then party $i$ is given the share $f(\alpha_i)$, where $\alpha_1, \ldots, \alpha_n \in \mathbb{F}$ are distinct, pre-selected points.  It is easy to verify that any $k$ parties can reconstruct the polynomial $f(x)$ and therefore recover the secret $f(0)=s$, while the shares of any $k-1$ parties reveal no information about the secret.  

Benhamouda, Degwekar,   Ishai,  and  Rabin
 \cite{benhamouda2021local} considered the question  of \emph{local leakage-resilience} of secret sharing schemes over prime fields and, in particular, Shamir's SSS over prime fields. 
In this setting, the model is that each party $i$ may adversarially leak some function $\tau_i(f(\alpha_i)) \in \{0,1\}^m$ of their share, for some (small) $m$.  
A scheme is $m$-local leakage resilient if for any two secrets $s,s'$, the total variation distance between the leaked messages under $s$ and the leaked messages under $s'$ is negligible.
The work of Benhamouda et al. established the following. 
\begin{thm} \label{thm:secret-sharing-leakage-cons-rate} \cite[Corollary 4.12]{benhamouda2021local}
	Let $m$ be a constant positive integer and let $n$ go to infinity. There exists an $\alpha_m<1$, for which Shamir's SSS with $n$ players and threshold $k = \alpha_m n$ is $m$-local leakage resilient. 
\end{thm}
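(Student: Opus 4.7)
The plan is to bound the statistical distance between the leakage distributions $\Phi_s$ and $\Phi_{s'}$ via Fourier analysis on the additive group $\F_p^n$. The structural input is that, under any secret $t$, the share vector is uniformly distributed over the affine subspace $A_t = v_t + C$, where $v_t = (t,t,\ldots,t)$ and $C = \{(g(\alpha_1),\ldots,g(\alpha_n)) : \deg g < k,\ g(0)=0\}$. Since the $\alpha_i$ are distinct and nonzero, $C$ is MDS of dimension $k-1$, and its dual $C^\perp$ is an MDS $[n, n-k+1, k]$ code. The two distributions differ precisely through the values of the additive characters of $C^\perp$ at the translation $v_s - v_{s'} = (s-s')\mathbf{1}$.

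The main steps are as follows. First, for each leakage pattern $\ell \in (\{0,1\}^m)^n$, set $S_{i,\ell_i} := \tau_i^{-1}(\ell_i)$, write $\Pr[\Phi_t = \ell] = \mathbb{E}_{x \in A_t}\!\left[\prod_{i} \mathbf{1}_{S_{i,\ell_i}}(x_i)\right]$, Fourier-expand each coordinate indicator over $\F_p$, and use orthogonality of characters on the affine subspace to collapse the resulting sum to frequencies $\alpha \in C^\perp$. This yields
\[
\Pr[\Phi_s = \ell] - \Pr[\Phi_{s'} = \ell] = \sum_{\alpha \in C^\perp \setminus \{0\}} \bigl(\chi(\alpha \cdot v_s) - \chi(\alpha \cdot v_{s'})\bigr) \prod_{i=1}^n \hat{\mathbf{1}}_{S_{i,\ell_i}}(\alpha_i),
\]
where $\chi$ is the canonical additive character of $\F_p$. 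Second, take absolute values and sum over $\ell$; factoring through the product in $i$ gives a statistical distance bound of $2 \sum_{\alpha \in C^\perp \setminus \{0\}} \prod_{i=1}^n W_i(\alpha_i)$, where $W_i(\alpha_i) := \sum_{\ell_i} |\hat{\mathbf{1}}_{S_{i,\ell_i}}(\alpha_i)|$ satisfies $W_i(0) = 1$, so only the nonzero coordinates of $\alpha$ contribute. Third, control the $W_i$'s via Cauchy--Schwarz together with Parseval's identity $\sum_{\alpha_i \in \F_p} \sum_{\ell_i} |\hat{\mathbf{1}}_{S_{i,\ell_i}}(\alpha_i)|^2 = 1$, and invoke the MDS weight enumerator of $C^\perp$ (nonzero codewords have weight at least $k$, and at most $p^{w-k+1}$ codewords have support contained in a fixed set of size $w$) to enumerate the dual code grouped by support.

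The main obstacle is that the naive pointwise bound $W_i(\alpha_i) \leq 2^{m/2}$ is far too loose: combined with the enumerator, it produces a weight-$w$ contribution of order $\binom{n}{w} p^{(w-k+1)/2} 2^{mw/2}$, which fails to be negligible for any $k < n$. To succeed one must exploit the $L^2$ structure of the Fourier weights---whose typical value at nonzero frequencies is $O(2^{m/2}/\sqrt{p})$---jointly across the $w$ nonzero coordinates of a dual codeword, in order to cancel the $p^{w/2}$ blow-up from the enumerator. The surviving combinatorial factor $\binom{n}{w}$ is then absorbed by taking $k = \alpha_m n$ for $\alpha_m$ a sufficiently large constant depending on $m$; balancing the resulting exponents (the binary-entropy term $n H(\alpha_m)$ against the decay rate from Parseval) pins down the threshold $\alpha_m < 1$, and yields a statistical distance of the form $2^{-\Omega(n)}$ once $p$ is at least polynomial in $n$, which is negligible as $n \to \infty$.
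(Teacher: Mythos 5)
The paper does not prove this statement; it is quoted verbatim as \cite[Corollary 4.12]{benhamouda2021local}, so there is no in-paper proof to compare against. What you have reconstructed is the general shape of the Fourier-analytic argument used in that reference (and in the follow-ups \cite{maji2021leakage,maji2022improved,klein2023new}), and the setup is right: the shares under secret $t$ are uniform on the coset $t\mathbf{1}+C$ with $C$ a generalized RS code of dimension $k-1$, the Fourier expansion collapses the statistical distance onto the dual MDS code $C^\perp$, and the quantity to control is $\sum_{\alpha\in C^\perp\setminus\{0\}}\prod_i W_i(\alpha_i)$ with $W_i(\alpha)=\sum_\ell|\widehat{\mathbf 1}_{\tau_i^{-1}(\ell)}(\alpha)|$ and $W_i(0)=1$.

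The gap is in how you propose to control that sum. You correctly observe that the pointwise bound $W_i\le 2^{m/2}$ is hopeless, and you say the fix is to ``exploit the $L^2$ structure jointly across the $w$ nonzero coordinates, in order to cancel the $p^{w/2}$ blow-up from the enumerator.'' But the Parseval/Cauchy--Schwarz route you sketch does not produce that cancellation. Fixing a support set $T$ with $|T|=w\ge k$ and an information set $I_T\subset T$ of size $w-k+1$ for the shortened dual, one gets $\sum_{\alpha\in C^\perp,\ \mathrm{supp}(\alpha)\subseteq T}\prod_{i\in T}W_i(\alpha_i)^2 \le \prod_{i\in I_T}\sum_{\beta}W_i(\beta)^2\le 2^{m(w-k+1)}$, and then Cauchy--Schwarz over the at most $p^{w-k+1}$ such codewords yields $(2^m p)^{(w-k+1)/2}$. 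Summing $\binom{n}{w}(2^m p)^{(w-k+1)/2}$ over $w$ is dominated by $w=n$, giving roughly $(2^m p)^{(n-k+1)/2}$, which blows up as $p$ grows for any fixed $\alpha_m<1$. In other words, the $L^2$ averaging only converts $2^{mw/2}$ into $2^{m(w-k+1)/2}$; it does not touch the $p^{(w-k+1)/2}$ factor, and the ``typical value $2^{m/2}/\sqrt p$'' heuristic cannot be deployed jointly over the constrained set of dual codewords in the way your outline assumes.

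The ingredient you are missing is qualitatively different from Parseval: one needs the \emph{pointwise} estimate $W_i(\alpha)\le c_m$ for every $\alpha\neq 0$, where
$c_m=\frac{2^m\sin(\pi/2^m)}{p\sin(\pi/p)}\xrightarrow{p\to\infty}\frac{2^m\sin(\pi/2^m)}{\pi}<1$.
This is proved not by Cauchy--Schwarz but by observing that among all $S\subseteq\F_p$ of a given size, $|\widehat{\mathbf 1}_S(\alpha)|$ for $\alpha\neq 0$ is maximized when $\alpha S$ is an interval (a Dirichlet-kernel / rearrangement fact), and then applying Jensen (concavity of $\sin$ on $[0,\pi]$) to the partition sizes $|\tau_i^{-1}(\ell)|$. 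The fact that $c_m$ is a constant strictly below $1$ for every fixed $m$ is what makes each of the $\ge k$ nonzero coordinates of a dual codeword contribute a genuinely contracting factor, and it is specific to the prime-field/small-image setting; it has no counterpart on the $L^2$ side alone. Without it, the argument as you've laid it out does not close, and any correct completion must incorporate something equivalent to this bound.
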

We also note that Benhamouda et al. \cite{benhamouda2021local} phrased a conjecture that for large enough $n$, Shamir's SSS with threshold $\alpha n$ is $1$-bit local leakage resilient for any constant $\alpha>0$. Despite considerable progress \cite{maji2021leakage,maji2022improved,klein2023new}, this conjecture is still open.
%
%

Given \Cref{thm:secret-sharing-leakage-cons-rate}, it is not feasible to repair RS codes over prime fields with a rate arbitrarily close to \(1\) while only leaking a constant number of bits from each party.
In \cite[Lemma 2]{nielsen2020lower}, it is shown that Shamir's SSS are not leakage resilient for sublinear threshold. Specifically,
\begin{lem} \cite[Lemma 2]{nielsen2020lower}
    Let $p$ be the smallest prime larger than $n$. Then, for any constant $c\in (0,1)$ and large enough $n$, it holds that Shamir's SSS over $\Fp$ with $n$ players and threshold $k=cn/\log n$ is not $1$-local leakage resilient.
\end{lem}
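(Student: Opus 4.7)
The plan is to carry out a random-leakage attack: pick each leakage function $\tau_i:\Fp\to\{0,1\}$ independently and uniformly at random, and show that with positive probability the resulting joint leakage already distinguishes every pair of distinct secrets with statistical distance $1$. The key organizing observation is that a Shamir sharing of secret $s$ together with $s$ itself is a codeword of the $[n+1,k]_p$ extended Reed--Solomon code
\[
C=\{(f(0),f(\alpha_1),\ldots,f(\alpha_n)):f\in\Fp[x],\,\deg f<k\},
\]
so making two secrets leakage-indistinguishable amounts to producing two codewords of $C$ whose last $n$ coordinates are $\tau$-mapped to the same string in $\{0,1\}^n$.

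Concretely, I would define the leakage map $\Phi:C\to\{0,1\}^n$ by $\Phi(c_0,c_1,\ldots,c_n)=(\tau_1(c_1),\ldots,\tau_n(c_n))$ and argue that, with high probability over $\tau$, $\Phi$ is injective on $C$. Injectivity immediately gives the required distinguishing: for $s\neq s'$ the conditional leakage distributions (uniform on the images of the two affine slices $\{c\in C:c_0=s\}$ and $\{c\in C:c_0=s'\}$) have disjoint supports inside $\{0,1\}^n$, so their total variation distance equals $1$, which is certainly not negligible.

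To bound the failure probability I would use that for any two distinct codewords $c\neq c'$, $\Pr_\tau[\Phi(c)=\Phi(c')]=2^{-w(c-c')}$, where $w(c-c')$ counts the coordinates $i\in\{1,\ldots,n\}$ at which $c_i\neq c'_i$. Since $C$ has minimum distance $n-k+2$ and dropping a single coordinate lowers weight by at most one, $w(c-c')\geq n-k+1$. A direct union bound then gives
\[
\Pr[\Phi\text{ is not injective}]\leq\binom{p^k}{2}\cdot 2^{-(n-k+1)}\leq p^{2k}\cdot 2^{-(n-k+1)}.
\]
Since $p$ is the smallest prime above $n$, Bertrand's postulate yields $p\leq 2n$, so for $k=cn/\log n$ one has $p^{2k}=2^{O(cn)}$, and this bound tends to $0$ once $c$ is below an explicit constant (of order $1/2$).

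The step I expect to be the main obstacle is closing the gap between the small constant that the worst-case union bound produces and the full range $c\in(0,1)$ claimed in the lemma. The bound above is tight only for codewords near the minimum weight, but the weight distribution of an RS code is heavily concentrated near weight $n$. To reach every $c<1$ I would replace the crude $2^{-(n-k+1)}$ by the full sum $\sum_{d\in C\setminus\{0\}}2^{-w(d)}$ and use that almost all nonzero codewords have weight within $O(\sqrt n)$ of $n$; their contribution of order $2^{-n}$ then dominates, making the expected number of collisions $o(1)$ for every $c<1$ when $p\leq 2n$, so a positive measure of $\tau$ yields an injective $\Phi$ and hence the desired $1$-bit attack.
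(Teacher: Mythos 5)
Your high-level plan — choose the $\tau_i$ independently at random and argue by a first-moment bound — is the right spirit and matches what \cite{nielsen2020lower} does (the paper's own Appendix~\ref{repair-existence} explicitly discusses this). However, the specific target you set, namely injectivity of the leakage map $\Phi$ on the whole code $C$, is strictly stronger than what the lemma needs, and it is this over-ambitious target that produces the $c<1/2$ barrier, not the crude use of the minimum distance. Your proposed fix via the weight enumerator contains a computational slip: the expected number of ordered colliding pairs is not $\sum_{d\in C\setminus\{0\}} 2^{-w(d)}$ but $p^k\sum_{d\in C\setminus\{0\}} 2^{-w(d)}$, because each nonzero difference $d$ is realized by exactly $p^k$ ordered pairs $(c,c')$ with $c-c'=d$. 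A direct count gives $\sum_{d\neq 0} 2^{-w(d)} = 2^{-n}\sum_{r=0}^{k-1}\binom{n}{r}\left(p^{k-r}-1\right) \approx p^k 2^{-n}$, so even the refined first-moment bound is $\approx p^{2k}2^{-n}$, which still forces $c<1/2$. (This is essentially the bound in the paper's Proposition~\ref{clm:decode-exist}, and indeed that proposition — which also asks for the whole polynomial to be recoverable — has a threshold of roughly $n/(2\log p)$, not $n/\log p$.)

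The fix, which is the distinction the paper itself flags in Appendix~\ref{repair-existence}, is to weaken the objective in two ways: only recover $f(0)$ rather than all of $f$, and only for a uniformly random polynomial $f$ rather than for every $f$ simultaneously. Concretely, bound
\[
E_{\tau,f}\Bigl[\#\{g:\deg g<k,\ g(0)\neq f(0),\ \Phi(g)=\Phi(f)\}\Bigr]
=\sum_{d:\,\deg d<k,\ d(0)\neq 0} 2^{-w(d)} \approx p^k 2^{-n},
\]
where the average over $f$ cancels one factor of $p^k$. With $p\le 2n$ and $k=cn/\log n$ this is $2^{-(1-c)n(1+o(1))}\to 0$ for every $c<1$. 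By Markov and averaging there is a fixed $\tau$ under which a $1-o(1)$ fraction of polynomials $f$ have $f(0)$ uniquely determined by their leakage, which yields non-negligible (in fact $1-o(1)$) total variation distance between the leakage distributions of random distinct secrets. So your scheme and your Lemma setup are correct, but the injectivity route cannot reach the stated range of $c$; you must pass to recovering only the secret, on average over the sharing randomness.
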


However, prior to this work, no explicit local attacks were known. By explicit attacks, we mean explicit small image functions that adversary applies on the shares. 
Our results explicit state such functions and we also show that adversary not only gains significant insight into the secret but, in fact, can completely recover it.

\subsection{Organization} The preliminaries are given Section~\ref{sec:prelim}. 
The repair scheme is given in \Cref{sec:repair-kls} and the decoding scheme is given in \Cref{sec:recover-poly}. In \Cref{sec:summary}, we summarize and present some open questions.

\section{Preliminaries}
    \label{sec:prelim}
    We begin with some needed notations.  For integers $a<b$ let $[a, b]=\{a,a+1, \ldots, b\}$ and $[a]=\{1,2,\ldots, a\}$. An arithmetic progression in some field $\mathbb{F}$ of length $N$ and a step $s\in \mathbb{F}$ is a set of the form $\{a, a+s, \ldots, a+(N-1)s \}$ for some $a \in \mathbb{F}$. 
    Throughout, let $p\neq 2$ be a prime number and $\Fp$ be the finite field of size $p$. We will realize $\Fp$ as the set $[-\frac{p-1}{2},\frac{p-1}{2}]$, where each element of $\Fp$ is viewed as an element of $[-\frac{p-1}{2},\frac{p-1}{2}]$, and  addition and multiplication are computed modulo $ p$.

    For an element $\gamma \in \Fp$ let $\gamma \cdot A:=\{\gamma \cdot a : a\in A \}$ be all the possible products of $\gamma$ with elements in $A$, hence,  the notation $\alpha \in \gamma\cdot [-t, t]$ for  $\alpha, \gamma,t \in \Fp$ implies that there exists  $j\in [-t,t]$ for which $\alpha \equiv \gamma \cdot j \mod p$.  
    Finally, let $\ep{x}:\Fp\rightarrow \mathbb{C}$ be the standard additive character of $\Fp,\ep{x} := \exp\left(\frac{2\pi \sqrt{-1}}{p} \cdot x\right)$.
    
\subsection{Exponential sum bounds}
Exponential sums have been a subject of extensive study in number theory and have found numerous applications in coding theory. In this work, we explore another application of these sums. Specifically, we demonstrate how bounds such as the Weil bound and those on Kloosterman sums can be leveraged to develop repair schemes for Reed-Solomon (RS) codes over prime fields, where each node transmits a constant number of bits. In this section, we introduce several key bounds, beginning with the \emph{Weil bound}.

\begin{thm} \cite[Theorem 5.38]{lidl1997finite} \label{thm:weil-bound}
    Let $f\in \Fp[X]$ be a non-constant polynomial of degree at most $k$. It holds that
    \[
    \left|\sum_{\alpha \in \Fp} \ep{f(\alpha) }\right| \leq  (k-1) \cdot \sqrt{p} \;.
    \]
\end{thm}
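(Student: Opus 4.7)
The plan is to invoke the Hasse-Weil theorem (the Riemann Hypothesis for curves over finite fields), which is the natural home of square-root cancellation of this shape. I would first package the desired sum $S := \sum_{\alpha \in \Fp} \ep{f(\alpha)}$ together with its analogues over finite extensions, $S_m := \sum_{\alpha \in \Fpm} \ep{\mathrm{Tr}_{\Fpm/\Fp}(f(\alpha))}$, into the $L$-function
\[
L(T) := \exp\!\left( \sum_{m \geq 1} S_m \frac{T^m}{m} \right).
\]
The target is then to show that $L(T)$ is a polynomial of degree $k-1$ whose reciprocal roots all have absolute value $\sqrt{p}$; factoring $L(T) = \prod_{i=1}^{k-1}(1 - \omega_i T)$ and reading off the coefficient of $T$ yields $S = -\sum_i \omega_i$, which immediately gives $|S| \leq (k-1)\sqrt{p}$.

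To realize $L(T)$ geometrically, I would pass to the Artin-Schreier cover $C : y^p - y = f(x)$, whose $\Fpm$-points satisfy $\#C(\Fpm) = p^m + S_m$ (up to known corrections at infinity when $p \mid \deg f$). This identity uses the orthogonality $\sum_{t \in \Fp} \ep{tz} = p \cdot \mathbf{1}_{z=0}$ applied to $z = y^p - y - f(\alpha)$ and summed over $y \in \Fpm$. Assuming $p \nmid \deg f$, Riemann-Hurwitz applied to the $p$-to-$1$ covering $C \to \mathbb{A}^1_x$ (ramified only above infinity) yields a smooth completion of genus $g = (k-1)(p-1)/2$, and the zeta function decomposes under the Galois action of $\Fp$ as $\zeta_C(T) = \frac{1}{(1-T)(1-pT)} \prod_{\psi \neq 1} L(T, \psi)$, with $\psi$ ranging over nontrivial additive characters of $\Fp$. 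A dimension count against $2g = (p-1)(k-1)$ forces each factor $L(T,\psi)$ to be a polynomial of degree exactly $k-1$.

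The concluding step is to apply Hasse-Weil: the zeros of $\zeta_C$ lie on $|T| = 1/\sqrt{p}$, and hence the reciprocal roots of every $L(T,\psi)$ have absolute value $\sqrt{p}$, which closes the argument after specializing to our distinguished character. The main obstacle here is Hasse-Weil itself, a deep theorem that I would simply cite. For small $k$, one could instead follow Stepanov's elementary method, which constructs an auxiliary polynomial that must vanish to high order on the level sets of $f$ and extracts the square-root savings by balancing this vanishing against a degree bound, avoiding the full algebraic-geometry toolkit.
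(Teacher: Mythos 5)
The paper does not prove this statement; it is cited verbatim as Theorem~5.38 of Lidl and Niederreiter, so there is no in-paper argument to compare against. Your sketch is a correct outline of the standard geometric proof of the Weil bound via the Riemann Hypothesis for curves: packaging the $S_m$ into $L(T)$, realizing $L(T)$ as a factor of the zeta function of the Artin--Schreier cover $y^p - y = f(x)$, computing the genus by Riemann--Hurwitz, and invoking Hasse--Weil. Two points worth tightening. First, the reduction to the case $p\nmid\deg f$ should be made explicit: since $\mathrm{Tr}_{\F_{p^m}/\Fp}(a\alpha^{pj}) = \mathrm{Tr}_{\F_{p^m}/\Fp}(a\alpha^{j})$ for $a\in\Fp$, one may strip $p$-th-power monomials from $f$ without changing any $S_m$, and one must then separately dispose of the degenerate case where the reduced polynomial becomes constant (there the sum has no cancellation, but the stated bound $(k-1)\sqrt{p}$ with the \emph{original} $k$ still holds trivially). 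Second, the genus count $2g=(p-1)(k-1)$ only shows that the $L(T,\psi)$ have \emph{average} degree $k-1$; you also need an a priori bound $\deg L(T,\psi)\le k-1$ for each nontrivial $\psi$ (via the Euler--Poincar\'e / Grothendieck--Ogg--Shafarevich formula, or a direct argument that each $L$-series is a polynomial of this degree) to pin each degree down exactly. As you say, the real content is Hasse--Weil. The cited source proves the bound by a different, self-contained route through Gaussian sums and the Davenport--Hasse relation, avoiding curves and Riemann--Hurwitz entirely: your geometric route is shorter and more conceptual but outsources more to deep theorems, whereas Lidl--Niederreiter's is longer but elementary. Since the paper uses the bound purely as a black box, either proof would serve equally well.
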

As one can see, the bound is non-trivial only when $\deg(f) \leq \sqrt{p}$. The Weil bound has applications throughout mathematics, theoretical computer science, and information theory. Motivated by these applications, there are extensions to the Weil bound that improve the bound in certain classes of polynomials, see e.g., \cite{cochrane2005improved,kaufman2011new}. 
Another important family of exponential sums are \emph{Kloosterman sums}, which first appeared in \cite{kloosterman1927representation} are sums of the form
\[
\left|\sum_{\alpha \in A} \ep{ \frac{a}{\alpha} + b \alpha}\right| \;,
\]
where $A\subseteq \Fp^*$ and $\gcd(a, p) \neq 0$.
Bounds on these sums also have broad applications, mainly in analytical number theory \cite{heath2000arithmetic,iwaniec2021analytic}, but also in coding theory \cite{helleseth1999z4,moisio2007moments,zinoviev2019classical}.


\subsection{Repair Schemes via Arithmetic Progressions} 
\label{sec:framework-repair-AP}
    In this section, we revisit the framework established in \cite{con2021nonlinear} that utilizes arithmetic progressions to construct repair schemes. 
    For simplicity, we assume the failed node is the $n$th one in this discussion, though this assumption is relaxed in our formal treatment.

A repair scheme for the $n$th node consists of $n-1$ functions $\tau_i: \Fp \rightarrow [s]$ and a function $G:[s]^{n-1} \rightarrow \Fp$. These functions satisfy the following condition for any codeword $(c_1,\ldots,c_n)\in \cC$:
\begin{equation}
\label{eq:repair-func}
    G(\tau_{1}(c_1),\ldots,\tau_{n-1}(c_{n-1}))=c_n.
\end{equation}
In the event of a failure at the $n$th node, each $i$th node (holding the symbol $c_i$) computes $\tau_i(c_i)$ and transmits it, requiring $\lceil \log s\rceil $ bits. The repair process is completed upon receiving the $n-1$ messages $\tau_i(c_i)$, with the $n$th symbol reconstructed using  \eqref{eq:repair-func}. The total bandwidth of the repair scheme, defined as the aggregate number of bits transmitted during the repair, amounts to $(n-1)\lceil \log(s) \rceil$ bits.
	This framework is applicable to any repair scheme, with the challenge primarily lying in the identification of suitable functions $\tau_i$. These functions must be sufficiently informative to allow the collective computation of the symbol $c_n$ from the information about $c_i$. However, they should not be excessively informative to ensure that the size of the image, denoted by $s$, remains small. This balance is crucial for minimizing the total bandwidth used in the repair process.
 
	
To define the functions $\tau_i$, we employ partitions in the following manner. It is important to note that each function $\tau_i$ induces a partition $\{\tau_i^{-1}(a): a \in [s]\}$ of $\Fp$. Conversely, any partition of $\Fp$ into $s$ subsets can define a function, where the function value at a point $a \in \Fp$ corresponds to the index of the subset containing $a$. Therefore, in our subsequent discussion, we will specify the functions $\tau_i$ by constructing partitions of $\Fp$ into $s$ subsets, utilizing arithmetic progressions.

The work in \cite{con2021nonlinear} demonstrates the use of arithmetic progressions for creating partitions that lead to efficient nonlinear repair schemes for RS codes over prime fields. The specifics of this construction are detailed next.

Fix an integer $1 \leq t \leq p$, set $s = \lceil p/t \rceil$, and define $A_0, \ldots, A_{s-1}$ as the partition of $\Fp$ into $s$ arithmetic progressions of length $t$ and step $1$:
\begin{equation} 
\label{eq:Fp-partition}
A_j = \begin{cases}
\{ jt, jt + 1, \ldots, jt + t - 1\} & \text{for } 0 \leq j \leq s-2, \\
\{(s-1)t, \ldots, p - 1\} & \text{for } j = s-1. 
\end{cases}
\end{equation}
For a nonzero $\gamma \in \Fp$, it can be verified that $\gamma \cdot A_0, \ldots, \gamma \cdot A_{s-1}$ also forms a partition of $\Fp$ into arithmetic progressions of length $t$ (except for the last set $\gamma \cdot A_{s-1}$), with step $\gamma$. Each function $\tau_i$, for $i \in [n-1]$, in the repair scheme is defined by a partition $\gamma_i \cdot A_0, \ldots, \gamma_i \cdot A_{s-1}$, with an appropriately chosen $\gamma_i$. It is important to note that distinct $i$'s will have distinct $\gamma_i$'s, ensuring the uniqueness of each $\tau_i$.
As observed in \cite{con2021nonlinear}, these partitions defined by the $\gamma_i$'s lead to a \emph{valid repair scheme} if and only if for any two codewords $c, c' \in \cC$ that belong to the same set in all of the $n-1$ different partitions (i.e., $c_i, c'_i \in \gamma_i \cdot A_{j_i}$ for all $i \in [n-1]$), it holds that $c_n = c'_n$. 

In \cite{con2021nonlinear}, the authors provided a simple yet sufficient condition for a linear code to have a valid repair scheme. We adapt their condition specifically for RS codes, which is the primary focus of this paper.

\begin{prop}\cite[Proposition 2.2]{con2021nonlinear}
\label{con:repair-condition}
    Consider an  $[n, k]_p$ RS code defined with the evaluation points $\alpha_1, \ldots, \alpha_n$. Let $\ell\in[n]$ be the index of the failed node. Let $t<p$ be an integer and for $i\in [n]\setminus\{\ell\}$, let $\gamma_i \in \Fp^*$. 
    If for any polynomial $f(x)\in \Fp[x]$ of degree less than $k$ with  $f(\alpha_i)\in \gamma_i\cdot [-t,t]$ for all $i\in [n] \setminus \{ \ell \}$, it holds that $f(\alpha_{\ell})=0$, then, the $\gamma_i$'s define  a valid repair scheme for the $\ell$th node with a total bandwidth of $(n-1) \cdot \log \ceil{p/t}$ bits.
\end{prop}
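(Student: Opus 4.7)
The plan is to exhibit explicit functions $\tau_i$ and a reconstruction function $G$, and then reduce correctness of the scheme to the stated polynomial hypothesis via a linear-difference argument. First, set $s = \ceil{p/t}$ and let $A_0, \ldots, A_{s-1}$ be the partition of $\Fp$ from \eqref{eq:Fp-partition}. For each $i \in [n] \setminus \{\ell\}$, define $\tau_i : \Fp \to \{0, \ldots, s-1\}$ by $\tau_i(c) = j$ iff $c \in \gamma_i \cdot A_j$; since $\{\gamma_i \cdot A_j\}_{j=0}^{s-1}$ partitions $\Fp$, the map is well-defined with image size $s$, and each $\tau_i(c_i)$ can be transmitted using $\log \ceil{p/t}$ bits.

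Next I would define the decoder $G$. Given a tuple $(j_i)_{i \neq \ell} \in \{0, \ldots, s-1\}^{n-1}$, set $G((j_i)_{i\neq\ell}) := h(\alpha_\ell)$ for any polynomial $h \in \Fp[x]$ with $\deg h < k$ satisfying $h(\alpha_i) \in \gamma_i \cdot A_{j_i}$ for all $i \neq \ell$ (and set $G$ arbitrarily if no such $h$ exists). The crux is to show this definition is unambiguous: if two such polynomials $f, f'$ both witness the definition, let $g = f - f'$, so $\deg g < k$. Since $\gamma_i \cdot A_{j_i}$ is an arithmetic progression of step $\gamma_i$ and length at most $t$, the difference of any two of its elements lies in $\gamma_i \cdot \{-(t-1), \ldots, t-1\} \subseteq \gamma_i \cdot [-t, t]$. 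Hence $g(\alpha_i) \in \gamma_i \cdot [-t, t]$ for every $i \neq \ell$, so the hypothesis of the proposition forces $g(\alpha_\ell) = 0$, i.e., $f(\alpha_\ell) = f'(\alpha_\ell)$.

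To verify correctness end-to-end, take any codeword $c = (f(\alpha_1), \ldots, f(\alpha_n)) \in \cC$ with $f \in \Fp[x]$ of degree less than $k$, and let $j_i := \tau_i(f(\alpha_i))$ for $i \neq \ell$. Then $f$ itself is a valid witness in the definition of $G$, so $G(\tau_1(c_1), \ldots, \tau_{n-1}(c_{n-1})) = f(\alpha_\ell) = c_\ell$, establishing \eqref{eq:repair-func}. The total number of bits transmitted is $(n-1) \log \ceil{p/t}$, as claimed. There is no substantive obstacle; the only real content of the argument is the observation that the difference of two elements in an arithmetic progression of step $\gamma_i$ and length $t$ lies in $\gamma_i \cdot [-t, t]$, which is precisely the set appearing in the hypothesis and which converts the combinatorial repair problem into the algebraic condition on low-degree polynomials.
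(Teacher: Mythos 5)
Your proof is correct and takes essentially the same route the paper relies on: the paper cites Proposition~2.2 of \cite{con2021nonlinear} without re-proving it, but the linear-difference argument you use (two polynomials whose evaluations fall in the same cells have a difference $g$ with $g(\alpha_i)\in\gamma_i\cdot[-t,t]$, so the hypothesis forces $g(\alpha_\ell)=0$) is exactly the mechanism the paper itself spells out in the proof of the analogous \Cref{con:decode-cond} for decoding schemes.
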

In \cite{con2021nonlinear}, the authors concentrated on the regime where $t = \Theta \left(p^{1 - \frac{1}{n-k +1}}\right)$, with $n$ and $k$ being constants relative to $p$. In this setting, each surviving node transmits $\Theta (\log(p))$ bits. Since $p$ is considered to be growing in these results, the number of bits transmitted is not constant.

In contrast, our work explores the regime where $t = \Theta (p)$. In this scenario, every node sends a \emph{constant} number of bits, specifically $\log \lceil p/t \rceil$ bits, to the replacement node. This approach significantly differs from the previous work by reducing the transmission to a constant bit size, regardless of the growth of $p$.

\section{Repair schemes using bounds on Kloosterman sums}
In this section, we present our scheme for repairing single failed nodes for RS codes over prime fields.
We shall extensively use the following simple lemma.
\begin{lem} \label{lem:large-sum} 
    Let $a_1,\ldots,a_n,t \in \Fp$ such that $a_i \in [-t, t]$ for all $i\in[n]$. Then, 
    $
    \left| \sum_{i=1}^{n} \ep{a_i} \right| \geq n\cdot \cos\left(\frac{2\pi t}{p}\right) \;.
    $
\end{lem}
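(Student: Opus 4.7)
The plan is to lower-bound the absolute value of the complex sum by its real part, then use the fact that each summand is a complex exponential whose phase is constrained to a short arc around $0$.

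First I would note that for any complex number $z$ one has $|z| \geq \mathrm{Re}(z)$, so
\begin{equation*}
\left| \sum_{i=1}^{n} \ep{a_i} \right| \;\geq\; \mathrm{Re}\!\left( \sum_{i=1}^{n} \ep{a_i} \right) \;=\; \sum_{i=1}^{n} \cos\!\left( \frac{2\pi a_i}{p} \right),
\end{equation*}
using the definition $\ep{x} = \exp(2\pi \sqrt{-1}\, x/p)$ to identify the real part of each summand with the cosine of its phase.

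Next I would exploit the hypothesis $a_i \in [-t,t]$. Since we realize $\Fp$ as $[-\tfrac{p-1}{2}, \tfrac{p-1}{2}]$, both $a_i$ and $t$ are integers with $|a_i| \leq t \leq (p-1)/2$, so the phases $2\pi a_i / p$ all lie in the interval $[-2\pi t/p,\, 2\pi t/p] \subseteq [-\pi, \pi]$. The cosine is even and monotone decreasing on $[0,\pi]$, so $|2\pi a_i/p| \leq 2\pi t/p$ gives
\begin{equation*}
\cos\!\left( \frac{2\pi a_i}{p} \right) \;\geq\; \cos\!\left( \frac{2\pi t}{p} \right)
\end{equation*}
for each $i$. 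Summing these $n$ inequalities and chaining with the previous display yields the claimed bound.

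I do not foresee a genuine obstacle: the statement is essentially a one-line observation about the worst-case real part of a unit-modulus complex exponential whose angle is confined to a symmetric arc. The only mild subtlety is making sure that $2\pi t / p$ stays in $[0,\pi]$ so that the monotonicity of cosine on that interval applies, which is automatic from the chosen representation of $\Fp$. No bound on the sign of $\cos(2\pi t/p)$ is needed, since if it is negative the inequality is vacuous anyway (the left side is nonnegative).
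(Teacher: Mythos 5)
Your proof is correct and follows exactly the same route as the paper's: bound the modulus by the real part, then use that each $a_i \in [-t,t]$ with $2\pi t/p < \pi$ forces $\cos(2\pi a_i/p) \geq \cos(2\pi t/p)$. You are slightly more explicit than the paper about why the cosine monotonicity applies (checking $2\pi t/p \leq \pi$), but there is no substantive difference.
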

\begin{proof}
    Let $\Re{x}$ denote the real part of the complex number $x$, then
    \begin{align*}
       \left| \sum_{i=1}^{n} \ep{a_i} \right| \geq \Re{\sum_{i=1}^{n} \ep{a_i}} &= \sum_{i=1}^{n}\Re{\ep{a_i}} \geq  n\cdot \cos\left(\frac{2\pi t}{p}\right),
    \end{align*}
    where the second inequality follows since for all $i\in[n]$, we have $a_i \in [-t,t]$.
\end{proof}

The next theorem due to Korolev establishes a nontrivial upper on short Kloosterman sums.
\begin{thm} \cite[Theorems 2, 3]{korolev2016karatsuba} \label{thm:korolev1}
    Let $p$ be a prime and let $n$ be an integer such that
    \[
    e^{(\ln p)^{2/3} \cdot (\ln \ln p)^{1/3}} \leq n \leq \sqrt{p} \;.
    \]
    Let $D = \frac{(\ln n)^{3/2}}{(\ln p) (\ln \ln p)^2}$. Then, it holds
    \begin{enumerate}
        \item For any $a$ such that $\gcd (a,p)=1$,
        \[ 
        \left|\sum_{1\leq \nu\leq n} \ep{\frac{a}{\nu}}\right| \leq  n \cdot\frac{260 \ln D}{D} = o(n)\;.
        \] 
        \item For any $a,b$ such that $\gcd (ab,p)=1$
        \[
    \left|\sum_{1\leq \nu\leq n} \ep{\frac{a}{\nu} + b\nu} \right| \leq  n \cdot\frac{222 \ln D}{D^{3/4}} = o(n) \;.
        \]
    \end{enumerate}
\end{thm}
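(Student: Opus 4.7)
The final statement is a deep analytic theorem of Korolev on short Kloosterman sums. I do not expect a short sketch to reach the explicit constants $260$ and $222$, but the main ideas of the proof are standard in the Karatsuba school, and my plan would be to follow them and then optimize parameters.

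The central obstruction is that the sum has only $n$ terms, with $n \le \sqrt{p}$, so one cannot complete it to a full sum over $\Fp^*$ and hope to use the Weil bound directly --- the completion error would swamp any cancellation. The way around this is \emph{multiplicative lengthening}. Fix a parameter $K$ and let $\mathcal{M}$ be a set of smooth integers of size about $K$. For each $\nu \in [1,n]$ one counts representations $\nu = x_1 x_2 \cdots x_j$ with factors $x_i \in \mathcal{M}$; if the representation count $r(\nu)$ is roughly uniform with average $\overline{r}$, one can replace the original sum by the longer multi-variable sum
\[
\frac{1}{\overline{r}} \sum_{(x_1,\ldots,x_j) \in \mathcal{M}^j,\ x_1\cdots x_j \leq n} \ep{a/(x_1\cdots x_j) + b(x_1\cdots x_j)}
\]
up to a small error. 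One then applies Cauchy--Schwarz (or a higher-moment H\"older inequality) in one of the factor variables, say $x_j$. After expanding and swapping the order of summation, the inner average over $x_j$ reduces to a \emph{complete} Kloosterman sum $\sum_{m \in \Fp^*} \ep{A/m + Bm}$, which the classical Weil bound controls by $2\sqrt{p}$.

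Optimizing the free parameter $K$ (and, for the second item, exploiting the two-variable Kloosterman phase to afford a fourth-moment argument instead of a second-moment one) produces savings of the shape $(\ln D)/D$ and $(\ln D)/D^{3/4}$, where $D$ measures the multiplicative room available to lengthen the sum. The main obstacle I anticipate is the combinatorial bookkeeping for the representation counts $r(\nu)$ over smooth numbers: controlling these uniformly is delicate, and is the step where Korolev's refinement improves over earlier results of Karatsuba. For our application it would be enough to know that the bound is $o(n)$ in the stated range of $n$, which follows from the above with only a rough optimization.
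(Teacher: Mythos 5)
The paper does not prove this result; it is an external theorem cited directly from Korolev (and attributed to the Karatsuba school), so there is no in-paper argument to compare yours against. Your sketch correctly identifies the defining difficulty---the sum is too short ($n \le \sqrt{p}$) to complete and apply Weil directly---and the standard workaround of \emph{multiplicative lengthening} via smooth-number factorizations, followed by H\"older/Cauchy--Schwarz in one factor variable to reduce to a \emph{complete} Kloosterman sum over $\Fp^*$, to which the Weil bound $2\sqrt{p}$ applies. That is indeed the architecture of Karatsuba's method and of Korolev's refinement of it, and you are right that the genuinely hard part, which a sketch cannot supply, is the uniform control of the representation counts $r(\nu)$ over the smooth set $\mathcal{M}$ together with the parameter optimization that yields the explicit constants $260$ and $222$ and the precise powers $(\ln D)/D$ and $(\ln D)/D^{3/4}$. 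For the purposes of this paper only the $o(n)$ estimate in the stated range of $n$ is used (in the proof of Theorem~\ref{thm:shortKloostermanRepair}), and your outline plausibly recovers that weaker conclusion; however, as you yourself acknowledge, it is not a proof of the cited statement, and the paper appropriately treats the theorem as a black box rather than reproving it.
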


Next, we present our first repair scheme that is capable of repairing every node in an $[n,3]_p$ RS code for $n$ that is small compared to $p$. 

\label{sec:repair-kls}
\begin{thm}\label{thm:shortKloostermanRepair}
Let $p$ be a large enough prime and $n$ be an integer such that $$2\exp ((\ln p)^{2/3} \cdot (\ln \ln p)^{1/3})\leq n \leq \sqrt{p}\;.$$ Then, the $[n+1, 3]_p$ RS code defined by the evaluation points $\alpha_i=i$ for  $i \in [0, n]$ admits a repair of any node by downloading three bits from all the other nodes.
\begin{proof}
Set $t:= \ceil{p/8}$ and assume that we wish to repair the $\ell$th node. We will prove that the condition in \Cref{con:repair-condition} holds with $\gamma_i = i - {\ell}$ for every $ i \in [0, n] \setminus \{ \ell \}$. Assume towards a contradiction that it does not hold. Thus, there exists a polynomial $f(x) = f_2(x-{\ell})^2 + f_1(x-{\ell}) + f_0 \in \Fp[x]$ such that $f(i) \in (i - {\ell}) \cdot [-t, t]$ for all $i\in [0, n] \setminus \{ \ell \}$, but $f({\ell}) \neq 0$. Define, 
\[
S:=\left| \sum_{\substack{i=0 \\ i\neq \ell}}^{n} \ep{\frac{f(i)}{i - {\ell}}} \right| 
\]
and note that by \Cref{lem:large-sum} and our choice of $t$, it holds that $S \geq n\cdot \cos \left( \frac{2\pi t}{p}\right) > 0.7n$.
On the other hand,
\begin{align*}
S&= \left| \sum_{\substack{i=0 \\ i\neq \ell}}^{n} \ep{f_2 \cdot (i - {\ell}) + f_1 + \frac{f_0}{i - {\ell}}} \right| \\
&= \left| \sum_{\substack{i=0 \\ i\neq \ell}}^{n} \ep{f_2 \cdot (i - {\ell}) + \frac{f_0}{i - {\ell}}} \right| \\
&= \left| \sum_{i=1}^{\ell} \ep{ - f_2 \cdot i - \frac{f_0}{i}} + \sum_{i=1}^{n - \ell} \ep{ f_2 \cdot i + \frac{f_0}{i}}\right|\\
&\leq \left| \sum_{i=1}^{\ell} \ep{f_2 \cdot i + \frac{f_0}{i}} \right| + \left| \sum_{i=1}^{n - \ell} \ep{ f_2 \cdot i + \frac{f_0}{i}}\right|. 
\end{align*}
Clearly, $\max(\ell, n-\ell) \geq n/2$. Assume without loss of generality that $\ell \geq n/2$ and let $S'=  \left| \sum_{i=1}^{\ell} \ep{f_2 \cdot i + \frac{f_0}{i}} \right|$.

Recall that by our assumption  $f(\ell)= f_0 \neq 0$, which implies that $\gcd(f_0, p) = 1$. Consider the following two options. First, if $f_2\neq 0$ then $\gcd(f_0f_2, p) = 1$ and by the second bound in \Cref{thm:korolev1}  $S' = o(n)$. Second, if $f_2 = 0$ then by the first bound in \Cref{thm:korolev1} $S' = o(n)$. In either case, we conclude that $S\leq 0.5n + o(n)$ and we arrive at a contradiction for large enough $p$, which implies a large enough $n$.

Lastly, the claim about the bandwidth follows by noting that by \Cref{con:repair-condition}, each node sends
$
\ceil{ \log \left( \ceil{\frac{p}{t}}\right)} \leq \ceil{\log \left( 8 \right)} = 3
$
bits.
\end{proof}
\end{thm}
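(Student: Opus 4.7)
The plan is to invoke the framework of Proposition~\ref{con:repair-condition}. To get a bandwidth of three bits per node, I need $\lceil \log\lceil p/t\rceil \rceil \leq 3$, so I will set $t = \lceil p/8\rceil$. For the failed node $\ell$, I will make the natural choice $\gamma_i = i-\ell$ for $i \in [0,n]\setminus\{\ell\}$. The intuition is that if $f(\ell)=0$, then $(x-\ell)$ divides $f(x)$, making $f(i)/(i-\ell)$ a well-defined linear function; so demanding $f(i) \in (i-\ell)\cdot[-t,t]$ with $t$ comparable to $p$ ought to force $f(\ell)=0$.

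Next I would assume toward contradiction that there exists a quadratic $f(x)=f_2(x-\ell)^2+f_1(x-\ell)+f_0$ with $f_0=f(\ell)\neq 0$ and $f(i) \in (i-\ell)\cdot[-t,t]$ for all surviving $i$. Writing $f(i)/(i-\ell) = f_2(i-\ell)+f_1 + f_0/(i-\ell)$, I would form the exponential sum $S = \left|\sum_{i\neq \ell}\ep{f(i)/(i-\ell)}\right|$. On the one hand, since each $f(i)/(i-\ell)$ lies in $[-t,t]$, Lemma~\ref{lem:large-sum} gives $S \geq n\cos(2\pi t/p) \geq n\cos(\pi/4) > 0.7n$.

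On the other hand, the constant term $f_1$ factors out as a unit-modulus phase, and after substituting $j = i-\ell$ and using $\ep{-x} = \overline{\ep{x}}$ to flip the negative indices, the triangle inequality gives
\[
S \leq \left|\sum_{j=1}^{\ell}\ep{f_2 j + f_0/j}\right| + \left|\sum_{j=1}^{n-\ell}\ep{f_2 j + f_0/j}\right|.
\]
Since $\max(\ell, n-\ell) \geq n/2 \geq \exp((\ln p)^{2/3}(\ln\ln p)^{1/3})$, the longer of the two sums has length in the range where Korolev's bound (Theorem~\ref{thm:korolev1}) applies. Since $f_0\neq 0$ gives $\gcd(f_0,p)=1$, I can invoke the second part of the theorem if $f_2\neq 0$ and the first part if $f_2=0$; either way, that sum is $o(n)$. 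The shorter sum is at most its length, and so trivially at most $n/2$. Combining, $S \leq n/2 + o(n)$, contradicting $S > 0.7n$ for $p$ (and hence $n$) large enough.

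The main obstacle is really conceptual rather than calculational: one must recognize that the ``right'' choice $\gamma_i = i-\ell$ transforms the partition condition into a Kloosterman-type sum, and that the asymmetric split into two ranges (arising from the failed node's position within $[0,n]$) is innocuous because at least one range is $\Omega(n)$ and the other contributes at most its length. Once the framework is set up, the bulk of the proof is just bookkeeping around Korolev's bound and the dichotomy $f_2=0$ vs.\ $f_2\neq 0$.
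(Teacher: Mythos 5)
Your proposal is correct and follows the same line as the paper's proof: the same choice $t = \lceil p/8\rceil$, $\gamma_i = i-\ell$, the same reduction to a Kloosterman-type sum via \Cref{con:repair-condition} and \Cref{lem:large-sum}, the same split at $i=\ell$ into two partial sums, and the same use of the two cases of Korolev's bound (\Cref{thm:korolev1}) on the longer piece while bounding the shorter piece trivially by its length. The only cosmetic difference is that you make explicit the conjugation identity $\ep{-x}=\overline{\ep{x}}$ that the paper uses silently, and you phrase the case split as ``longer versus shorter'' rather than ``WLOG $\ell \geq n/2$''; these are the same argument.
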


\begin{remark}
From a cryptographic perspective, Theorem \ref{thm:shortKloostermanRepair} provides an explicit leakage attack on the secret in a $3$-out-of-$n$ Shamir's Secret Sharing Scheme (SSS) over prime fields. This attack is significant as it requires leaking as little as  $3$ bits from each share to completely reveal the secret. Specifically, we construct deterministic leakage functions that enable the complete recovery of the secret value, $f(0)$. To the best of our knowledge, the only other known attacks on Shamir's SSS are probabilistic in nature, where the adversary randomly chooses the leakage functions \cite{nielsen2020lower}. The above result therefore shows a  significant vulnerability of Shamir's SSS to targeted attacks.\end{remark}

\begin{remark}

    In this section, we construct a repair scheme that downloads a constant number of bits from any node. The total bandwidth is larger than $k\cdot \log(p)$, the amount of information needed to learn the polynomial. One might ask, does the information received from the nodes tell us the entire polynomial?
    We would like to stress that this is not the case here. We show that there are two distinct polynomials on which the scheme transmits the same information. Consider the polynomials $f(x) = x$ and $g(x) = 2x$ and assume that we want to repair the node $0$. A node that stores $f(\alpha)$ or $g(\alpha)$ will transmits the same value, since $f(\alpha), g(\alpha) \in \alpha \cdot [0, t-1]$ for every $\alpha\in \Fp^*$. Thus, we cannot distinguish between these two polynomials upon receiving the information from the nodes.
\end{remark}

\section{Repair scheme using the Weil bound} \label{sec:recover-poly}
In \Cref{sec:repair-kls}, we developed a repair scheme leveraging bounds on Kloosterman sums. This section introduces a new scheme for full-length RS codes, utilizing the Weil bound. This scheme can also repair any node by downloading a constant number of bits from the majority of the other functional nodes. Remarkably, it facilitates the recovery of the entire polynomial (codeword), and not just an individual node.

To recover the entire polynomial, it is necessary to download at least $k \log(p)$ bits in total from the nodes. A straightforward approach would be to retrieve the symbols of any $k$ nodes and then solve the classical interpolation problem to reconstruct the polynomial. Our proposed scheme, however, involves downloading a constant number of bits from more than $k$ nodes. We demonstrate that this information is sufficient to deduce the entire polynomial. 
    
    In order to show that the scheme indeed recovers the entire polynomial, we will first introduce some definitions and later present a stronger condition than the one presented in \Cref{con:repair-condition}. 
    
A \emph{decoding scheme} for a code $\mathcal{C}$ consists of $n$ functions $\tau_i: \mathbb{F}_p \rightarrow A$, where $A$ is some set, and a function $G: A^n \rightarrow \mathbb{F}_p^n$. This scheme must satisfy the condition that for any codeword $(c_1, \ldots, c_n) \in \mathcal{C}$, 
\begin{equation}
\label{stamstam2}
    G(\tau_1(c_1), \ldots, \tau_n(c_n)) = (c_1, \ldots, c_n).
\end{equation}

The following observation provides a condition for the existence of a decoding scheme.
\begin{obs}
\label{obs1}
    Let $\mathcal{C} \subseteq \mathbb{F}_p^n$ be a  code. A set of $n$ functions $\tau_i: \mathbb{F}_p \rightarrow A$ define a valid decoding scheme,  (i.e., there exists a function $G$ satisfying \eqref{stamstam2}) if and only if the mapping 
    \[
    c \in \mathcal{C} \mapsto (\tau_1(c_1), \ldots, \tau_n(c_n))
    \]
    is injective.
\end{obs}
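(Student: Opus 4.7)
The plan is to prove the two directions of the biconditional separately, noting that the observation is essentially a restatement of a general fact from set theory about functions and injectivity, specialized to the codeword setting.

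For the forward direction, I would argue by contradiction. Suppose a decoding function $G$ exists satisfying \eqref{stamstam2}, but the map $c \mapsto (\tau_1(c_1), \ldots, \tau_n(c_n))$ is not injective. Then there exist two distinct codewords $c, c' \in \mathcal{C}$ with $(\tau_1(c_1),\ldots,\tau_n(c_n)) = (\tau_1(c'_1),\ldots,\tau_n(c'_n))$. Applying $G$ to both tuples yields the same output, but by \eqref{stamstam2} this output must equal both $c$ and $c'$, contradicting $c \neq c'$.

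For the backward direction, assume the map $\Phi: c \mapsto (\tau_1(c_1), \ldots, \tau_n(c_n))$ is injective. I would explicitly construct $G: A^n \to \mathbb{F}_p^n$ by setting $G(y) := \Phi^{-1}(y)$ for $y$ in the image of $\Phi$, which is well-defined by injectivity, and defining $G$ arbitrarily (e.g., as the zero vector) outside this image. Then by construction $G(\tau_1(c_1),\ldots,\tau_n(c_n)) = \Phi^{-1}(\Phi(c)) = c$ for every $c \in \mathcal{C}$, so \eqref{stamstam2} holds.

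Since both directions are straightforward, the only real consideration is presentation: the proof is essentially a one-paragraph verification of definitions, and I would keep it short, perhaps just a few sentences combining the two directions. There is no substantive obstacle here; the observation is a reformulation of the defining property of a decoding scheme as a statement about the injectivity of the induced leakage map, and its value lies in enabling later arguments (e.g., for \Cref{thm:Weil-recover-const}) to focus purely on showing that distinct low-degree polynomials cannot produce identical leakage tuples.
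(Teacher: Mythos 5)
Your proof is correct, and it is exactly the standard argument one would give; the paper in fact states this as an observation without any proof at all, treating it as a self-evident unfolding of the definitions. Both of your directions are sound, and your construction of $G$ as $\Phi^{-1}$ on the image (arbitrary elsewhere) is the right formalization of the backward implication.
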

To present the trivial approach suggested above to retrieve the polynomial as a decoding scheme of an $[n,k]$ RS code, let $A=\Fp$ and set, say the first $k$ functions, $\tau_1, \ldots, \tau_k: \mathbb{F}_p \rightarrow \mathbb{F}_p$, as identity functions, and the remaining $n-k$ functions, $\tau_i$, as constant zero functions.

Our goal is to construct decoding schemes non-explicitly and explicitly from a set of $n$ functions where each function outputs only a constant number of bits. We begin with a non-explicit decoding scheme derived by a simple application of the probabilistic method, the proof of which is relegated to Appendix \ref{repair-existence}. We then proceed with the explicit scheme derived from the Weil bound.

\begin{restatable}{prop}{mytheorem}
 \label{clm:decode-exist}
    Consider an $[n,k]$ RS code over $\Fp$ with  $k < \frac{(n + 1)\log(s)}{2\log(p) + \log(s)}$, for some integer $s$. Assume further that the code is defined by the evaluation points $\alpha_1, \ldots, \alpha_n \in \Fp$.
    Then, there are $n$ functions $\tau_1, \ldots, \tau_n:\Fp \rightarrow [s]$ such that for any $f\in \Fp [x]$ of degree less than  $k$, the mapping  $f\mapsto(\tau_1(f(\alpha_1)), \ldots, \tau_n(f(\alpha_n)))$ is injective and therefore it is  sufficient to retrieve $f$ from it. 
\end{restatable}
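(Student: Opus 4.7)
The plan is to prove this by a direct probabilistic-method argument: choose each $\tau_i : \Fp \to [s]$ independently and uniformly at random from all functions $\Fp \to [s]$, and show that with positive probability the resulting map $f \mapsto (\tau_1(f(\alpha_1)), \ldots, \tau_n(f(\alpha_n)))$ is injective on polynomials of degree less than $k$. Injectivity then yields the decoding function $G$ by \Cref{obs1}.

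To bound the collision probability, I would fix two distinct polynomials $f, g$ of degree less than $k$ and consider their difference $h = f - g$. Since $h$ is a nonzero polynomial of degree at most $k-1$, it vanishes on at most $k-1$ of the evaluation points, so there are at least $n - k + 1$ indices $i$ for which $f(\alpha_i) \neq g(\alpha_i)$. For each such $i$, the two values lie in $\Fp$ and are distinct, so under the uniform random choice of $\tau_i$ we have $\Pr[\tau_i(f(\alpha_i)) = \tau_i(g(\alpha_i))] = 1/s$; moreover these events are independent across $i$. Hence the probability that $f$ and $g$ get mapped to the same tuple is at most $s^{-(n-k+1)}$.

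The final step is a union bound over the $\binom{p^k}{2} \leq p^{2k}/2$ unordered pairs of distinct polynomials of degree less than $k$, giving an expected number of collisions at most
\[
\frac{1}{2} \cdot p^{2k} \cdot s^{-(n-k+1)}.
\]
The hypothesis $k < \frac{(n+1)\log s}{2\log p + \log s}$ rearranges (by clearing denominators and exponentiating) precisely to $p^{2k} < s^{n-k+1}$, so the above expectation is strictly less than $1$. Therefore there exists a fixed choice of $\tau_1, \ldots, \tau_n$ for which no two distinct polynomials collide, and the map is injective as required.

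There is no real analytic obstacle here; the argument is a textbook probabilistic existence proof. The only point that requires care is bookkeeping the exponents so that the resulting sufficient condition matches the hypothesis exactly (in particular, getting the $n - k + 1$ from the degree bound and combining $k \log s$ on the correct side of the inequality). Everything else is standard, and no number-theoretic input is needed for this existential statement—unlike the explicit construction via the Weil bound that follows.
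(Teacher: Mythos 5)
Your proposal is correct and matches the paper's proof in all essentials: both pick the $\tau_i$ uniformly at random, bound a pairwise collision probability by $s^{-(n-k+1)}$ using the fact that two distinct polynomials of degree less than $k$ agree on at most $k-1$ evaluation points, and finish with a union bound over pairs, which the hypothesis on $k$ makes strictly less than $1$.
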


\Cref{clm:decode-exist} shows that for a constant $s$, any $[n,k]_p$ RS code with $k = O(n/\log(p))$ has a decoding scheme where the functions $\tau_i$ have a constant-size image. This means it is feasible to recover (decode) any polynomial of degree less than $k$ by transmitting a constant number of bits from each node.
However, the result is existential and not explicit. 
 On the other hand, the following simple claim shows that for $k$ which is slightly larger than the bound in \Cref{clm:decode-exist}, a decoding scheme does not exist for an $[n,k]_p$ RS code.
\begin{claim}
    Consider an $[n,k]$ RS code over $\Fp$ with $k>\frac{n\log s}{\log p}$. Then, for any $n$ functions $\tau_1, \ldots, \tau_n: \Fp \rightarrow [s]$, the mapping $c\in \cC \rightarrow (\tau_1(c_1), \ldots, \tau_n(c_n))$ is not injective.
\end{claim}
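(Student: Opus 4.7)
The plan is a straightforward pigeonhole argument, so the proof will be very short. First I would count the size of the domain: the $[n,k]_p$ RS code $\mathcal{C}$ is in bijection with the set of polynomials in $\Fp[x]$ of degree less than $k$, so $|\mathcal{C}| = p^k$. Next I would count the size of the codomain of the mapping $c \mapsto (\tau_1(c_1), \ldots, \tau_n(c_n))$, which lies in $[s]^n$ and therefore has at most $s^n$ elements.

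Then I would observe that the hypothesis $k > \frac{n \log s}{\log p}$ is equivalent, after taking $\log$ of both sides (or rearranging), to
\[
k \log p > n \log s,
\]
which in turn gives $p^k > s^n$. Hence $|\mathcal{C}| > |[s]^n|$, and by pigeonhole no function from $\mathcal{C}$ into $[s]^n$ can be injective. In particular, the specific mapping $c \mapsto (\tau_1(c_1), \ldots, \tau_n(c_n))$ is not injective, as desired.

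There is no real obstacle here; the claim is a volume/cardinality bound complementary to \Cref{clm:decode-exist}, matching (up to the constant and the lower-order $\log s$ in the denominator there) the existential result, and the only thing to be careful about is consistency of the logarithm base — but since $p^k > s^n$ is base-free, this is immaterial.
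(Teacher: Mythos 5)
Your proof is correct and takes essentially the same pigeonhole approach as the paper; you also implicitly correct a small typo in the paper's proof (the paper writes ``$n^s$ vectors'' where $s^n$ is meant).
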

\begin{proof}
    There are $p^k$ codewords and $n^s$ vectors $[s]^n$. Thus, for $k>\frac{n\log s}{\log p}$, by the pigeonhole principle, there are two codewords that map to the same element under $\tau_1,\ldots,\tau_n$.
\end{proof}
By combining the above results, we conclude that the maximum dimension of RS codes that permits a decoding scheme with a constant number of bits from each node is \( k = \Theta(n/\log p) \). Our next result, while falling short of this benchmark, presents an explicit decoding scheme. Specifically, we introduce an explicit decoding scheme for a full-length RS code, i.e., of length \( p \) with \( k = O(\sqrt{p}) \). This dimension is smaller than the maximum possible dimension which is \( \Omega(p/\log p) \) for a full-length RS code. An intriguing open question is the construction of an explicit decoding scheme for a full-length RS code with dimension \( \Omega(p/\log p) \) that requires only a constant number of bits from each node.

As in the case of the repair schemes, we construct the function $\tau_i$ by partitioning $\mathbb{F}_p$ into arithmetic progressions with step $\gamma_i \in \mathbb{F}_p$. The following proposition provides a condition similar to that in \Cref{con:repair-condition} for the functions $\tau_i$ defined by the $\gamma_i$'s to define a valid decoding scheme.
Furthermore, we allow also that some of the nodes do not send any information; equivalently, the function they compute on their data is a constant function.
\begin{prop} \label{con:decode-cond}
    Consider an $[n,k]_p$ RS code defined by the evaluation points $\alpha_1, \ldots, \alpha_n$. Let $t < p$ be an integer, $M\subset [n]$ of size $m$, and $\gamma_i, i \in [n]\backslash M$, be nonzero elements of $\mathbb{F}_p$. If the only polynomial $f \in \mathbb{F}_p[x]$ of degree less than $k$ for which $f(\alpha_i) \in \gamma_i [-t,t]$ for all $i \in [n]\backslash M$ is the zero polynomial, then the $\gamma_i$'s define a valid decoding scheme with a total bandwidth of $(n-m) \cdot \log \left( \lceil \frac{p}{t} \rceil \right)$ bits.
\end{prop}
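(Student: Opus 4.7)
My plan is to mimic the proof of \Cref{con:repair-condition} from \cite{con2021nonlinear}, but applied at the level of injectivity of the full encoding rather than recovery of a single coordinate. The scheme's functions are the ones already described in \Cref{sec:framework-repair-AP}: for each $i\in[n]\setminus M$, set $s=\lceil p/t\rceil$ and let $\tau_i:\Fp\to[s]$ be the function whose fibers are the arithmetic progressions $\gamma_i\cdot A_0,\ldots,\gamma_i\cdot A_{s-1}$ from \eqref{eq:Fp-partition} (so $\tau_i(x)=j$ iff $x\in\gamma_i\cdot A_j$). For $i\in M$, let $\tau_i$ be the constant-zero function, which transmits no information. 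Invoking \Cref{obs1}, to produce the desired reconstruction function $G$ it suffices to show that the map $c\in\cC\mapsto(\tau_1(c_1),\ldots,\tau_n(c_n))$ is injective.

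So I would suppose, for contradiction, that two codewords $c,c'\in\cC$ corresponding to polynomials $f,g\in\Fp[x]$ of degree less than $k$ produce the same tuple of partition indices. For each $i\in[n]\setminus M$ this means $f(\alpha_i)$ and $g(\alpha_i)$ lie in a common arithmetic progression $\gamma_i\cdot A_{j_i}$ of length at most $t$ and step $\gamma_i$. Consequently $f(\alpha_i)-g(\alpha_i)\in \gamma_i\cdot(A_{j_i}-A_{j_i})\subseteq\gamma_i\cdot[-(t-1),t-1]\subseteq\gamma_i\cdot[-t,t]$. Setting $h:=f-g$, the polynomial $h$ has degree less than $k$ and satisfies $h(\alpha_i)\in\gamma_i\cdot[-t,t]$ for every $i\in[n]\setminus M$, so by the hypothesis of the proposition $h\equiv 0$, i.e.\ $f=g$, contradicting the assumption $c\neq c'$. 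Combined with \Cref{obs1} this yields the required decoding function $G$.

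For the bandwidth count, each $i\in[n]\setminus M$ transmits a symbol from $[s]$ with $s=\lceil p/t\rceil$, costing $\lceil\log\lceil p/t\rceil\rceil$ bits, while the $m$ nodes in $M$ transmit nothing, giving a total bandwidth of $(n-m)\cdot\log\lceil p/t\rceil$ bits as claimed.

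I do not anticipate a genuine obstacle here; the only subtle point is ensuring that we are indeed comparing differences $f(\alpha_i)-g(\alpha_i)$ inside the set $\gamma_i\cdot[-t,t]$ rather than the larger $\gamma_i\cdot[-2t,2t]$, which is why the difference-of-arithmetic-progressions step (yielding the tighter $[-(t-1),t-1]$) is important. Everything else is a direct translation of the repair-scheme framework into the language of decoding, plus the trivial extension allowing the $m$ nodes in $M$ to abstain.
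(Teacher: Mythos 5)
Your proof is correct and follows essentially the same approach as the paper's: define the $\tau_i$ via the arithmetic-progression partitions, reduce injectivity to \Cref{obs1}, and observe that the difference of two codewords with matching partition indices lands in $\gamma_i\cdot[-t,t]$ on each coordinate outside $M$, forcing it to be the zero polynomial by the hypothesis. Your remark that the difference set is $\gamma_i\cdot[-(t-1),t-1]\subseteq\gamma_i\cdot[-t,t]$ (rather than $\gamma_i\cdot[-2t,2t]$) is the same fact the paper uses implicitly, just spelled out.
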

Note that in the scheme in \Cref{con:decode-cond}, the nodes with indices $i\in M$ are the nodes that do not send any information.
\begin{proof}
    Let $s = \lceil \frac{p}{t} \rceil$ and define the sets $A_j$, $0 \leq j \leq s-1$, as in \eqref{eq:Fp-partition}. For each $i \in [n]\backslash M$, define the function $\tau_i$ according to the partition $\gamma_i \cdot A_0, \ldots, \gamma_i \cdot A_{s-1}$, i.e., $\tau_i(a) = j$ if and only if $a \in \gamma_i \cdot A_j$.
    
    By Observation \ref{obs1}, it suffices to show that the mapping for $c \in \mathcal{C}$, $c \mapsto (\tau_1(c_1), \ldots, \tau_n(c_n))$, is injective. Let $c, c' \in \mathcal{C}$ be two codewords that agree on the $n$ values $\tau_i(c) = \tau_i(c')$ for all $i \in [n]\backslash M$. Then, $c - c'$ is a codeword such that its $i$th symbol belongs to the set $\gamma_i \cdot [-t,t]$ for all $i \in [n]$. Therefore, by the condition, $c - c'$ must be the zero codeword, i.e., $c = c'$. The claim about the bandwidth follows since each partition consists of exactly $s$ sets.
\end{proof}

Next, we introduce our decoding scheme that utilizes the Weil bound. The scheme facilitates decoding even in scenarios where a subset of the nodes does not transmit any information. 

\begin{thm} \label{thm:Weil-recover-const}
    Let $B\geq 3$ be a positive integer. Let $k$ and $m$ be positive integers and $p$ be a prime such that $k + m \leq  \cos (\frac{2\pi}{2^B}+\frac{2\pi}{p})\sqrt{p}$. 
    Then, the full-length $[p, k]_p$ RS code admits a decoding scheme by downloading $B$ bits from any  $p-m$  nodes.
\end{thm}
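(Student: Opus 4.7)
The plan is to apply \Cref{con:decode-cond} with a carefully chosen family of scaling factors. Setting $t := \lceil p/2^B \rceil$, the induced partition has $\lceil p/t \rceil \leq 2^B$ parts, so each $\tau_\alpha$ leaks at most $B$ bits. Given any subset $M \subseteq \Fp$ of size $m$, I would fix some $\alpha_0 \in M$ and take $\gamma_\alpha := (\alpha - \alpha_0)^k$ for every $\alpha \in \Fp \setminus M$; these are manifestly nonzero. It then suffices to verify that zero is the only polynomial $f$ of degree less than $k$ satisfying $f(\alpha) \in \gamma_\alpha[-t,t]$ for every $\alpha \in \Fp \setminus M$.

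The key manipulation is an inversion: substituting $\beta := (\alpha - \alpha_0)^{-1}$ turns the condition $f(\alpha)/(\alpha - \alpha_0)^k \in [-t,t]$ into $\tilde f(\beta) \in [-t,t]$, where $\tilde f(\beta) := \beta^k f(\alpha_0 + \beta^{-1})$. Expanding $f(x) = \sum_{j=0}^{k-1} c_j (x - \alpha_0)^j$ shows that $\tilde f(\beta) = \sum_{j=0}^{k-1} c_j \beta^{k-j}$ is a genuine polynomial of degree at most $k$ with $\tilde f(0) = 0$, and, crucially, $\tilde f$ is non-constant whenever $f \neq 0$. The map $\alpha \mapsto \beta$ is a bijection from $\Fp \setminus M$ onto a set $J \subseteq \Fp^*$ of size $p-m$ whose complement $\tilde M := \Fp \setminus J$ has size exactly $m$.

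Assuming toward contradiction that some such $f \neq 0$ exists, I would estimate $S := \sum_{\beta \in J} \ep{\tilde f(\beta)}$ in two ways. \Cref{lem:large-sum} yields the lower bound $|S| \geq (p-m)\cos(2\pi t/p) \geq (p-m)\cos\left(\frac{2\pi}{2^B}+\frac{2\pi}{p}\right)$, since $t/p \leq 1/2^B + 1/p$. The Weil bound (\Cref{thm:weil-bound}) applied to the non-constant $\tilde f$ of degree at most $k$ gives $\left|\sum_{\beta \in \Fp}\ep{\tilde f(\beta)}\right| \leq (k-1)\sqrt{p}$, and the triangle inequality yields $|S| \leq (k-1)\sqrt{p} + m$. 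Combining both estimates with the hypothesis $k + m \leq \cos\left(\frac{2\pi}{2^B}+\frac{2\pi}{p}\right)\sqrt{p}$ and rearranging reduces everything to the inequality $(m+1)\sqrt{p} \leq m\bigl(1+\cos\left(\frac{2\pi}{2^B}+\frac{2\pi}{p}\right)\bigr)$, which fails for every prime $p \geq 5$ since the right-hand side is strictly less than $2$; this supplies the contradiction.

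The main obstacle---and the motivation for the exponent $k$ rather than the more obvious $k-1$ in the definition of $\gamma_\alpha$---is ensuring that $\tilde f$ is non-constant for \emph{every} nonzero $f$ of degree less than $k$. With the exponent $k-1$, a scalar multiple of $(x-\alpha_0)^{k-1}$ would be sent to a nonzero constant polynomial, on which the Weil bound is vacuous, and the argument would collapse. Raising the exponent to $k$ inserts an additional factor of $(\alpha - \alpha_0)$ in the denominator, forcing $\tilde f(0) = 0$ and thereby eliminating precisely this degeneracy.
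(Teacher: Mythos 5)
Your proof is correct, but it takes a genuinely different route from the paper's. The paper chooses $\gamma_i = \prod_{j=1}^m (i - \ell_j)^{-1}$, so the condition $f(i) \in \gamma_i[-t,t]$ becomes $f(i)\prod_j(i-\ell_j) \in [-t,t]$ directly, with no substitution needed. The elegance is that the $m$ missing nodes $\ell_j$ automatically contribute the value $0 \in [-t,t]$, so \Cref{lem:large-sum} applies to a \emph{complete} sum over $\Fp$ (lower bound $p\cos(2\pi t/p) > pC$), while the Weil bound handles the degree-$(k+m-1)$ polynomial $f(x)\prod_j(x-\ell_j)$ directly (upper bound $(k+m-1)\sqrt p$); the two collide exactly under $k+m \le C\sqrt p$. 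Your choice $\gamma_\alpha = (\alpha - \alpha_0)^k$ instead produces an \emph{incomplete} exponential sum, and the Möbius inversion $\beta = (\alpha-\alpha_0)^{-1}$ is needed to see that the summand is a character of a polynomial $\tilde f$. Your argument then pays a price twice: the lower bound shrinks from $pC$ to $(p-m)C$, and the upper bound inflates from $(k-1)\sqrt p$ to $(k-1)\sqrt p + m$ via the triangle inequality on the $m$ dropped terms. The slack in the final inequality $(m+1)\sqrt p \le m(1+C)$ absorbs both losses, so the argument goes through, but the paper's choice of $\gamma$ is tighter precisely because it converts the ``missing'' nodes into extra polynomial degree (where Weil loses only $\sqrt p$ per unit degree) rather than into additive error (where each dropped term costs a full unit). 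Your closing remark about why the exponent must be $k$ rather than $k-1$ is a genuine insight about this route, but notice that the paper sidesteps the degeneracy issue entirely: $f(x)\prod_j(x-\ell_j)$ is automatically non-constant for $f \ne 0$ because it has the roots $\ell_j$, so there is no analogue of the ``constant $\tilde f$'' pathology to engineer around.
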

Before delving into the proof of the theorem, it is important to note the impact of increasing $B$, the number of bits transmitted by the nodes. By transmitting more information, we can increase the sum of $k + m$. This implies that the decoding scheme can be adapted either to RS codes with a larger dimension or to scenarios where a greater number of nodes do not transmit any information.
\begin{proof}
    Set $t:= \ceil{p/2^B}$ and assume that the nodes $\ell_1, \ldots, \ell_m\in \Fp$ do not transmit any information during the scheme. 
The result will follow by showing that the condition in \Cref{con:decode-cond} holds with  $\gamma_i = \prod_{j=1}^m (i - \ell_j)^{-1}$ for all $i\in \Fp \setminus \{\ell_1, \ldots, \ell_m \}$. Assume towards a contradiction that it does not hold. Then, there exists a nonzero polynomial $f(x) = \sum_{j=0}^{k-1} a_jx^{j}$ such that $f(i)\in \left( \prod_{j=1}^m (i - \ell_j)^{-1} \right) [-t, t]$ for every $i\in \Fp \setminus \{\ell_1, \ldots, \ell_m  \}$.
    Thus, by \Cref{lem:large-sum}
    \[
    \left| \sum_{i\in \Fp} \ep{f(i) \cdot \prod_{j=1}^m (i - \ell_j)} \right| \geq p\cos(2\pi t/p)>
    p\cos(\frac{2\pi}{2^B}+\frac{2\pi}{p})
    \;.
    \]
    where the last inequality follows $2\pi t/p \leq \frac{2\pi}{2^B}+\frac{2\pi}{p} < \pi$.  
    On the other hand, by assumption $f\not \equiv 0$, and thus $f(x) \cdot \prod_{j=1}^m (x - \ell_j)$ is a non-constant polynomial of degree at most $k + m - 1$ and therefore by the Weil bound (Theorem \ref{thm:weil-bound})  
    \[
    \left| \sum_{i\in \Fp} \ep{f(i) \cdot \prod_{j=1}^m (i - \ell_j)} \right|\leq (k + m - 1)\sqrt{p},
    \]
    and we arrive at a contradiction by the choice of $k$ and $m$.
    The claim on the bandwidth follows from \Cref{con:decode-cond}.
\end{proof}


\begin{remark}
    We note that the decoding scheme presented in \Cref{thm:Weil-recover-const} is in particular, also a repair scheme. Indeed, any node of the $m$ nodes that does not send any information can be viewed as the failed node.
\end{remark}
\begin{remark}
    In \Cref{thm:Weil-recover-const}, we utilized the classical Weil bound on exponential sums, which provides an upper bound for \emph{complete} sums, i.e., sums that range over all elements in the prime field. However, it is important to note that there are also upper bounds for \emph{incomplete sums} (see, for example, \cite{ostafe2022weil} that bounds sums ranging over small subgroups). These bounds on incomplete sums, can be employed to construct decoding schemes for RS.
\end{remark}

Recall that \Cref{thm:secret-sharing-leakage-cons-rate} implies that repairing the \(0\)th node in an RS code is not feasible when the rate is sufficiently high. Specifically, for any positive integer \(m\), there exists a constant \(\alpha_m\) such that an \([p, \alpha_m p]_p\) RS code does not admit a repair of the \(0\)th node using \(m\) bits of information from each of the other nodes.

On the other hand, \Cref{thm:Weil-recover-const} allows for the repair of the \(0\)th node in a full-length RS code with dimension \(O(\sqrt{p})\). An interesting question arises: how far can one push the dimension of the code while using this scheme? In other words, what is the largest dimension of a full-length RS code that permits decoding (or repairing) via the scheme given in \Cref{thm:Weil-recover-const}? The following example illustrates some of the limitations of this scheme.

\begin{example}Consider the \([p, \frac{p-1}{2}]_p\) RS code over \(\mathbb{F}_p\). Our goal is to show that the decoding scheme given in \Cref{thm:Weil-recover-const} is not valid, even if all of the nodes except the \(0\)th send almost all their information.    
Consider the codeword defined by the polynomial 
\[f(x) = -\sum_{i=0}^{\frac{p-3}{2}} (x+1)^i = \frac{1 - (x+1)^{\frac{p-1}{2}}}{x},\]
of degree \(\frac{p-3}{2}\). 
Since we assume that all the nodes send their information except for the \(0\)th node, then by the definition of \(\gamma_i\) in the proof of \Cref{thm:Weil-recover-const}, we have that 
\(\gamma_i=1/i\) for each \(i \in \mathbb{F}_p^*\). Next, let \(t=3\) and define the function \(\tau_i\) as in \Cref{sec:framework-repair-AP}. Since \(t=3\), the information \(\tau_i(f(i))\) received from the \(i\)th node suffices to determine \(f(i)\) up to \(3\) possible values. Furthermore, \(\tau_i\) is defined by the partition \(\gamma_iA_j, j=0,\ldots, \lfloor p/3 \rfloor\) of \(\mathbb{F}_p\) as in \eqref{eq:Fp-partition}. 
Since \(f(i) \in i^{-1} \cdot \{0,1,2\}\) for every \(i \in \mathbb{F}_p^*\), we have \(\tau_i(f(i))=0\). On the other hand, for \(g(x) \equiv 0\), also \(\tau_i(g(i))=0\) for all \(i \in \mathbb{F}_p^*\) and \(g(0)=0 \neq -\frac{p-1}{2} = f(0)\). We conclude that the received information does not suffice to recover the value of \(f\) at \(0\), let alone decode \(f\).
%
%
%
%
%
\end{example}
\section{Summary and Open Problems} \label{sec:summary}
This paper extends the work initiated in \cite{con2021nonlinear}, focusing on the repair of RS codes over prime fields. We advance this line of work by developing nonlinear repair schemes for RS codes, where each node transmits only a constant number of bits. While the total bandwidth of our schemes exceeds that of the trivial repair scheme, they offer a significant advantage in scenarios where each node is limited in power or communication and is thus constrained to transmit only a constant number of bits—a condition under which the trivial scheme is not feasible.

Furthermore, we demonstrate that our schemes can be interpreted as explicit local leakage attacks on Shamir's SSS. To our knowledge, these are the first instances of such explicit attacks.

The main contribution of this work is establishing connections between bounds on exponential sums and the problem of repairing RS codes. By applying some known bounds, we obtain results that imply both a repair and a decoding scheme for certain RS codes.

We conclude with open questions for future research:
\begin{enumerate}
    \item Develop efficient algorithms to compute the repair function \(G\) as defined in \eqref{eq:repair-func}, i.e., efficiently compute the lost data upon receiving inputs from the functioning nodes.
    \item Determine the maximal rate of an RS code over prime fields for which there is a repair scheme that downloads a constant number of bits from each node. Constructing such schemes remains an interesting open question.
\end{enumerate}

\bibliographystyle{plain}
\bibliography{refs}
\appendix
\section{Appendix}
\subsection{Proof of \Cref{clm:decode-exist}}
\label{repair-existence}
For the reader's convenience we restate the proposition.
\mytheorem*

Before delving into the proof of \Cref{clm:decode-exist}, it is important to note its resemblance to the proof of Theorem 2 in \cite{nielsen2020lower}. Both proofs employ a similar strategy: they randomly select the functions $\tau_i$ and then apply the union bound. However, a key distinction lies in the objective of the recovery process. While our approach aims to recover the entire polynomial, the focus of Theorem 2 in \cite{nielsen2020lower} is solely on recovering the secret, i.e., the value of the polynomial at zero.
\begin{proof}
    Pick the functions $\tau_i: \Fp \rightarrow{[s]}$  uniformly at random. Namely, for each $i\in [n]$ and  $\alpha \in \Fp$,  $\tau_i(\alpha)$ is a uniformly distributed random element of $[s]$. Fix two distinct polynomials $f,g$ of degree less than  $k$, then it holds that 
    \begin{align*}
        \Pr_{\tau_1, \ldots, \tau_n} \left[(\tau_1(f(\alpha_1)), \ldots, \tau_n(f(\alpha_n))) = (\tau_1(g(\alpha_1)), \ldots, \tau_n(g(\alpha_n)))\right] &= \prod_{i=1}^n \Pr \left[ \tau_i(f(\alpha_i)) = \tau_i(g(\alpha_i)) \right] \;.
    \end{align*}
    If $f(\alpha_i) = g(\alpha_i)$, then necessarily $\Pr \left[ \tau_i(f(\alpha_i)) = \tau_i(g(\alpha_i)) \right] = 1$, and since $f$ and $g$ are polynomials of degree less than  $k$ they agree on at most $k-1$ of the $\alpha_i$'s. If $f(\alpha_i) \neq g(\alpha_i)$, then by the definition of $\tau_i$, we have that $\Pr \left[ \tau_i(f(\alpha_i)) = \tau_i(g(\alpha_i)) \right] = 1/s$. Thus, 
    \[
    \Pr_{\tau_1, \ldots, \tau_n} \left[ (\tau_1(f(\alpha_1)), \ldots, \tau_n(f(\alpha_n))) = (\tau_1(g(\alpha_1)), \ldots, \tau_n(g(\alpha_n))) \right] \leq \left( \frac{1}{s} \right)^{n-k+1}\;.
    \]
    Now, by the union bound, the probability that there exist two polynomials $f$ and $g$ of degree less than $k$ such that $(\tau_1(f(\alpha_1)), \ldots, \tau_n(f(\alpha_n))) = (\tau_1(g(\alpha_1)), \ldots, \tau_n(g(\alpha_n)))$ is at most $p^{2k}\cdot (1/s)^{n-k+1}<1$ by the choice of $k$. Thus with positive probablity the required functions $\tau_i$ exist, and the result follows.
%
%
\end{proof}

\end{document}